\pgfplotsset{compat=newest}
\appto\UrlBreaks{\do\-}
\newcommand{\ie}{\textit{i}.\textit{e}., }
\pgfplotsset{every tick label/.append style={font=\footnotesize}}
\title{\LARGE\bf{The Dirichlet Mechanism for Differential Privacy on the Unit Simplex}}
\author{Parham Gohari, Bo Wu, Matthew Hale, Ufuk Topcu
\thanks{Parham Gohari is with the Department of Electrical and Computer Engineering, University of Texas at Austin, Austin, TX. Bo Wu  and Ufuk Topcu are with the Department of Aerospace Engineering and Engineering Mechanics, and the Oden Institute for Computational Engineering and Sciences, University of Texas at Austin, Austin, TX. email: {\tt\small $\{$pgohari, bwu3, utopcu$\}$@utexas.edu}. Matthew Hale is with the Department of Mechanical and Aerospace Engineering at the University of Florida, Gainesville, FL. email: {\tt\small matthewhale@ufl.edu}}}
\newtheorem{theorem}{Theorem}
\newtheorem{lemma}{Lemma}
\newtheorem{proposition}{Proposition}
\newtheorem{definition}{Definition}
\theoremstyle{remark}
\newtheorem*{remark}{Remark}
\begin{document}
\maketitle
\begin{abstract}
As members of a network share more information with each other and network providers, sensitive data leakage raises privacy concerns. 
To address this need for a class of problems, 
we introduce a novel mechanism that privatizes vectors belonging to the unit simplex. Such vectors can be seen in many applications, such as privatizing a decision-making 
policy in a Markov decision process. 
We use differential privacy as the underlying mathematical framework for these developments. 
The introduced mechanism is a probabilistic mapping that maps a vector within the unit simplex to the same domain according to a Dirichlet distribution. We find the mechanism well-suited for inputs within the unit simplex because it always returns a privatized output that is also in the unit simplex. Therefore, no further projection back onto the unit simplex is required. We verify the privacy guarantees of the mechanism for two cases, namely, identity queries and average queries. In the former case, we derive expressions for the differential privacy level of privatizing a single vector within the unit simplex. In the latter case, we study the mechanism for privatizing the average of a collection of vectors, each 
of which is in the unit simplex. We establish a trade-off between the strength of privacy and the variance of the mechanism output, and we introduce a parameter to balance the trade-off between them.
Numerical results illustrate these developments. 
\end{abstract}

\section{Introduction}
In many decision-making problems, a policy-maker forms a control policy based on data collected from the individuals in a network. %Often times, 
The gathered data often contains sensitive information, which raises privacy concerns~\cite{hart1992nonintrusive}. 
In some applications, privatizing sensitive data has been achieved
by adding carefully calibrated noise to sensitive data and functions thereof~\cite{7833044,7577745,8431397}. These noise-additive approaches are well-suited to some classes of numerical data, though
sensitive data may take a form ill-suited to them. 
For example, developments in~\cite{8814723} explored symbolic control systems in which additive
noise cannot be meaningfully implemented. 

In this work, we privatize data inputs that belong to the unit simplex, \ie the set of vectors with non-negative entries that sum to one. Such vectors are seen in many decision-making problems. For example, in Markov decision processes (MDPs), the goal is to find a total-reward-maximizing policy \cite{puterman2014markov,sutton1998introduction}. %In certain cases, deterministic policies that map each MDP state to a single action are sufficient to guarantee optimally  \cite{sutton1998introduction,puterman2014markov}. 
In certain cases, it is shown that the optimal policy is a randomized function that maps from the MDP's states to a probability distribution on the set of actions available at that 
state, see, \emph{e.g.},~\cite{8735817,wu2019switched,chatterjee2006markov}. 
Finite action sets give rise to discrete, finitely supported probability distributions, which
can be formalized as vectors with non-negative entries summing to one. 
Policies of this kind arise in applications such as autonomous driving~\cite{brechtel11} and
the smart power grid~\cite{misra13}, and revealing them can therefore reveal individuals' behaviors.
Thus, there is a need to privatize such policies, and this use represents one application of
privatizing sensitive data in the unit simplex. Existing noise-additive approaches will not, in
general, produce a privatized vector in the unit simplex, and we therefore propose a new
approach to privacy for this context. 

In this paper, we use differential privacy as the underlying mathematical framework for privacy. Differential privacy, first introduced in~\cite{dwork2006calibrating}, is designed to
protect the exact values of sensitive pieces of data, while preserving their usefulness in aggregate
statistical analyses. 
Two desirable properties of differential privacy are (i) that it is immune to 
post-processing~\cite{dwork2014algorithmic},
in the sense that arbitrary post-hoc transformations of privatized data do not weaken its
privacy guarantees, and (ii) that it is robust to side information, in that gaining additional
information about data-producing entities does not weaken its privacy guarantees by 
much~\cite{kasiviswanathan2014semantics}. 
As a result, differential privacy has been frequently used as the mathematical formulation of privacy in 
both computer science and, more recently, in control 
theory~\cite{nissim2018privacy,cortes2016differential,han2018privacy}. 

As the main contribution of this paper, we introduce a mechanism that privatizes a vector within the unit simplex. A mechanism is a probabilistic mapping from some pre-defined domain to a pre-defined range, 
and a mechanism is used to privatize sensitive data. 
This paper develops a novel mechanism using the Dirichlet distribution, and we therefore 
call it the Dirichlet mechanism. 
The Dirichlet distribution is a multivariate distribution supported on the unit simplex,
which makes it a natural choice for this setting because its outputs are always elements of
the unit simplex. 

In our developments, we use probabilistic differential privacy, which is known to
imply that the conventional form of differential privacy also holds~\cite{gotz11}. 
Then, we show that the Dirichlet mechanism satisfies probabilistic differential privacy for
identity queries. By an identity query, we mean privatizing a single vector within the unit simplex. In the course of proving these privacy guarantees, based on the assumptions we provide, we prove the log-concavity of the cumulative distribution function of a Dirichlet distribution. The proof that we present may be of independent interest in ongoing research on convexity analysis of special functions such 
as~\cite{karp2016normalized}. 
In this vein, we prove a generalization of Theorem 6 of~\cite{prekopa1972} which has been used in later works for stochastic programming~\cite{Prekopa71logarithmicconcave}.

Beyond identity queries, we further show that the Dirichlet mechanism is differentially private for average queries, in which we privatize the average of a collection of vectors, each within the unit simplex. 
We derive analytic expressions for privacy levels of the averaging case, and %we leverage
%these results to 
show that the Dirichlet mechanism provides privacy protections 
whose strength increases with the number of vectors being averaged. 

Following the convention in the differential privacy literature, we also analyze the accuracy of the output of the mechanism~\cite{dwork2014algorithmic,4389483}. 
In particular, we evaluate the accuracy of the Dirichlet mechanism in terms of the expected value and the variance of its outputs. Similar to additive noise methods, the Dirichlet mechanism output has the same expected value as its input, which implies that its privatized outputs obey a distribution centered on the underlying sensitive data. We show that there exists a trade-off between the privacy and the variance of the output of the mechanism. The derived expression for the output variance shows how to tune the worst-case variance by scaling the input by a parameter that we introduce in the mechanism
definition. 

We emphasize that additive noise privacy mechanisms are ill-suited to privacy on the unit simplex
because they add noise of infinite support. As a result, such mechanisms will output a vector that
does not belong to the unit simplex; attempting to normalize the noise would result in its distribution
not being one known to provide differential privacy. It is for these reasons that we develop the
Dirichlet mechanism. Although its form appears quite different from existing mechanisms, they are
related through membership in a broad class of probability distributions. In particular,
the Laplacian, Gaussian, and exponential mechanisms all use distributions belonging to a parameterized
family of exponential distributions. The outputs of the Dirichlet distribution are 
equivalent to a normalized vector of i.i.d. exponential random variables, which means their
distribution also belongs to the exponential family. This connection reveals why we should expect
the Dirichlet mechanism to be well-suited to differential privacy, and the developments of
this paper formalize and confirm this intuition. 

We also point out here that the exponential mechanism is another widely used differentilly private mechanism which can be used for sensitive data ill-suited
to additive approaches~\cite{dwork2014algorithmic}. However, the exponential mechanism can be computationally demanding to implement for privacy applications with many possible outputs.
The output space here is the unit simplex, which contains uncountably many elements. The resulting
complexity of such an implementation therefore makes it infeasible~\cite{vadhan2017complexity}, especially in large dimensions, and we avoid it here. \par

An extended version of this paper which includes the proofs to the technical lemmas used in this paper can be found at \cite{extended}. \par

%The remainder of the paper is organized as follows.
%Section~\ref{pf} defines differential privacy and formulates the problems we solve.
%Then Section~\ref{body} then defines the Dirichlet mechanism and applies it to identity
%and average queries. Next, Section~\ref{sim} presents numerical results, and
%Section~\ref{con} provides concluding remarks. 

\section{Preliminaries}\label{pf}
\subsection{Notation}
In this section we establish notation used throughout the paper. We represent the real numbers by $\mathbb{R}$ and the positive reals by $\mathbb{R}_+$. For a positive integer $n$, let $[n]:=\{1,\dots,n\}$. We denote the unit simplex in $\mathbb{R}^n$ by $\Delta_n$ where 
\begin{equation*}
    \Delta_n := \left\{x \in \mathbb{R}^n \mid \sum\limits_{i=1}^n x_i = 1, x_i \geq 0 \textnormal{ for all } i\in[n]  \right\}.
\end{equation*}
We use $\Delta_n^\circ$ to represent the interior of $\Delta_n$. 
Letting $W \subseteq [n-1]$ and $\eta, \Bar{\eta} \in (0,1)$, we then define the set
\begin{equation*}
    \Delta_n^{(\eta,\Bar{\eta})} \!:=\! \left\{p\in\Delta_n^\circ \mid \sum\limits_{i\in W}p_i \leq 1-\Bar{\eta}, p_i \geq \eta \textnormal{ for all } i\in W\! \right\}\!.
\end{equation*}
%Below, privacy will be defined over~$\Delta_n^{(\eta,\Bar{\eta})}$ because its bounds on entries
%enable bounds on certain ratios of probability distributions that we consider in developing
%differential privacy. 

Letting $p$ be a vector in $\mathbb{R}^n$, we use the notation $p_{(i,j)}$ to denote the vector $(p_i,p_j)^T \in \mathbb{R}^2$, where $(\cdot)^T$ is the transpose of a vector, and $p_{-(i,j)} \in \mathbb{R}^{n-2}$ to denote the vector~$p$ with $i^{th}$ and $j^{th}$ entries removed. $\mathbb{P}[\cdot]$ denotes the probability of an event. For a random variable, $\mathbb{E}[\cdot]$ denotes its expectation and Var$[\cdot]$ denotes its variance. We use the notation $|\cdot|$ for the cardinality of a finite set. $||\cdot||_1$ denotes the 1-norm of a vector. We also use special functions 
\begin{align*}
    &\Gamma(z) := \int_0^\infty x^{z-1}\exp{(-x)}dx, & &z\in\mathbb{R}_+, \\
    &\text{beta}(a,b) := \int_0^1 t^{a-1} (1-t)^{b-1}dt, & &a,b \in \mathbb{R}_+.
\end{align*}

\subsection{Differential Privacy}
Intuitively, differential privacy guarantees that two \textit{nearby} inputs to a privacy mechanism will
generate statistically similar outputs. In differential privacy, the notion of ``nearby'' is formally defined by an adjacency relation, and we define adjacency over the unit simplex as follows. 
\begin{definition}\label{adjacency}
For a constant $b \in (0,1]$ and fixed set $ W$, two vectors $p,q \in \Delta_n^{(\eta,\Bar{\eta})}$ are said to be $b$-adjacent if there exist indices $i,j \in  W$ such that
    \begin{equation*}
        p_{-(i,j)} = q_{-(i,j)}\ \text{and} \ ||p-q||_1 \leq b.
    \end{equation*}
%    \hfill $\diamond$
\end{definition}

In words, two vectors are different if they differ in two entries by an amount not more than~$b$.
Ordinarily, differential privacy considers sensitive data differing in a single 
entry, \emph{e.g.},
one entry in a database~\cite{dwork2014algorithmic}. However, it is not possible to do so for an element of the unit
simplex because changing only a single entry would violate the condition that vectors' entries
sum to one. We therefore consider privacy with the above adjacency relation. Privacy itself
is defined next. 

\begin{definition}\label{def:dp}(Probabilistic differential privacy; \cite{machanavajjhala2008privacy})
Let~$b \in (0, 1]$ and $W \subseteq [n-1]$ be given. Fix a probability space $(\Omega,\mathcal{F},\mathbb{P})$. A mechanism $\mathcal{M}: \Delta_n^{\eta,\Bar{\eta}} \times \Omega \mapsto \Delta_n$ is said to be probabilistically $(\epsilon,\delta)$-differentially private if, for all $p \in \Delta_n^{\eta,\Bar{\eta}}$, we can partition the output space $\Delta_n$ into two disjoint sets $\Omega_1, \Omega_2$, such that
\begin{equation} \label{1req}
    \mathbb{P}[\mathcal{M}(p)\in\Omega_2]\leq\delta,
\end{equation}
and for all $q\in\Delta_n^{\eta,\Bar{\eta}}$ $b$-adjacent to $p$ and for all $x\in\Omega_1$,
\begin{equation*}
\log\left(\frac{\mathbb{P}[\mathcal{M}(p)=x]}{\mathbb{P}[\mathcal{M}(q)=x]}\right)\leq\epsilon.
\end{equation*}
%\hfill $\diamond$
\end{definition}

Probabilistic differential privacy is known to imply conventional differential 
privacy~\cite{machanavajjhala2008privacy}, and, with a slight abuse of terminology,
we refer to Definition~\ref{def:dp} simply as ``differential privacy'' for the remainder
of the paper. 

\subsection{Dirichlet Mechanism}
One contribution of this paper is to present a differentially private mechanism that, without any need of projection, maps elements of $\Delta_n$ to $\Delta_n$. In order to do so, we first introduce the Dirichlet mechanism. A Dirichlet mechanism with parameter $k \in \mathbb{R}_+$, denoted by $\mathcal{M}_D^{(k)}$, takes as input a vector $p\in\Delta_n^\circ$ and outputs $x \in \Delta_n$ according to the Dirichlet probability distribution function (PDF) centered on~$p$, \ie
\begin{equation*}
    \mathbb{P}[\mathcal{M}_D^{(k)}(p)=x] = {\frac{1}{\text{B}(kp)}} \prod\limits_{i=1}^{n-1} x_i^{kp_i-1}\left(1-\sum\limits_{i=1}^{n-1}x_i\right)^{kp_n-1},
\end{equation*}
where
\begin{equation}\label{b}
    \text{B}(kp) := \frac{\prod\limits_{i=1}^{n} \Gamma(kp_i)}{\Gamma\left(k\sum\limits_{i=1}^n p_i\right)}
\end{equation}
is the multi-variate beta function. 

We later use the parameter $k$ to adjust the trade-off that we establish between the accuracy and the privacy level of the Dirichlet mechanism. Next, we establish the privacy guarantees that
the Dirichlet mechanism provides. 
%\subsection{Identity Queries vs. Average Queries}
%This paper studies the Dirichlet mechanism in two settings which are identity queries and average query. By identity query we mean evaluating the output %of the Dirichlet mechanism when it is fed with an element of $\Delta_n$. Whereas, in average queries, we evaluate the output of the Dirichlet mechanism %when it is fed with the average of a collection of vectors, each within $\Delta_n$. In both cases, we want to show that Dirichlet mechanism satisfies %$(\epsilon,\delta)$-differential privacy with respect to the $b$-adjacency relationship that corresponds to each setting.
\section{Dirichlet Mechanism for Differential Privacy of Identity Queries} \label{body}
We begin by analyzing identity queries under the Dirichlet mechanism. 
Here, a sensitive vector~$p$ is directly input to the Dirichlet mechanism to make it approximately
indistinguishable from other adjacent sensitive vectors. 
To show the level of privacy that holds, we first bound~$\delta$, then bound~$\epsilon$.

\subsection{Computing~$\delta$} \label{ss:identity_delta}
Fix~$W \subseteq [n-1]$. 
In accordance with Definition~\ref{def:dp}, we 
partition the output space of the Dirichlet mechanism into two sets $\Omega_1, \Omega_2$ defined by
\begin{equation}\label{omega1}
    \Omega_1 := \{x\in\Delta_n \mid x_i \geq \gamma \textnormal{ for all } i \in  W\}
\end{equation}
and
\begin{equation}\label{omega2}
    \Omega_2 := \{x\in\Delta_n \mid x\notin \Omega_1\},
\end{equation}
where~$\gamma \in (0, 1)$ is a parameter that defines these sets.

Our goal is to show that the Dirichlet mechanism output belongs to $\Omega_1$ with high probability. Let $p$ be a vector in $\Delta_n^{(\eta,\Bar{\eta})}$. In the next lemma we show how to calculate $\mathbb{P}[\mathcal{M}_D^{(k)}(p)\in\Omega_1]$.

\begin{lemma}\label{lemma2}
Let $W \subseteq [n-1]$ be a given set of indices which is used to construct $\Delta_n^{(\eta,\Bar{\eta})}$,  let $p \in \Delta_n^{(\eta,\Bar{\eta})}$ and let
\begin{equation*}
    \mathcal{A}_r := \left\{x \in \mathbb{R}^{r-1} \mid \sum\limits_{i\in[r-1]}x_i \leq 1, x_i \geq \gamma
    \textnormal{ for all }i\in  W \right\},
\end{equation*}
for all $r\geq| W|+1$. Then, for a Dirichlet mechanism with parameter $k\in\mathbb{R}+$, we have that $\mathbb{P}[\mathcal{M}_D^{(k)}(p)\in\Omega_1]$ is equal to
\begin{equation*}
\dfrac{\bigintsss_{\mathcal{A}_{|W|+1}}\prod\limits_{i\in W}x_i^{kp_i-1} \! \! \left(1-\sum\limits_{i\in W}x_i\right)^{k(1-\sum\limits_{i\in W}p_i)-1} \!\! \prod\limits_{i\in W}dx_i}{\text{\emph{B}}(k\Tilde{p}_W)},
\end{equation*}
where $\Tilde{p}_W \in \Delta_{|W|+1}$ is equal to $p$ with its entries outside the set $W$ removed, and with an additional entry equal to
%\begin{equation*}
  $  1-\sum\limits_{i\in W} p_i$
%\end{equation*}
appended as its final entry. 
\end{lemma}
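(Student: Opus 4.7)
The plan is to derive the claimed formula by direct integration of the Dirichlet density of $\mathcal{M}_D^{(k)}(p)$ over $\Omega_1$, exploiting the aggregation property of the Dirichlet distribution to eliminate the coordinates whose indices lie outside $W$. Because $\Omega_1$ only constrains $x_i$ for $i \in W$, the integration over the remaining coordinates $x_j$ with $j \in W' := [n-1]\setminus W$ is unrestricted apart from the simplex constraint, so these variables can be marginalized out cleanly.

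First I would write $\mathbb{P}[\mathcal{M}_D^{(k)}(p)\in\Omega_1]$ as the Lebesgue integral of the Dirichlet PDF over $\Omega_1$ in the free coordinates $x_1,\dots,x_{n-1}$. Next, holding $(x_i)_{i\in W}$ fixed and setting $s := 1 - \sum_{i\in W} x_i$, I would perform the substitution $y_j := x_j / s$ for $j \in W'$, reducing the inner integral to a standard Dirichlet integral over the unit simplex in $\mathbb{R}^{|W'|}$. A careful exponent count extracts $s$ with total power $k(1 - \sum_{i\in W} p_i) - 1$, while the inner integral evaluates via the Dirichlet normalizing constant to $\prod_{j\in W'}\Gamma(kp_j)\,\Gamma(kp_n)/\Gamma(k(1 - \sum_{i\in W} p_i))$.

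Substituting back and using $\text{B}(kp) = \prod_{i=1}^{n} \Gamma(kp_i)/\Gamma(k)$ from equation~(\ref{b}) together with $\sum_{i=1}^n p_i = 1$, the gamma factors indexed by $W'\cup\{n\}$ cancel and leave the prefactor $\Gamma(k)/\left(\prod_{i\in W}\Gamma(kp_i)\cdot\Gamma(k(1-\sum_{i\in W}p_i))\right)$, which is exactly $1/\text{B}(k\tilde{p}_W)$. The remaining integration over $(x_i)_{i\in W}$ runs over $\mathcal{A}_{|W|+1}$, yielding the expression in the lemma.

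The main obstacle is purely bookkeeping: one must carefully track the exponent of $s$ after the change of variables and verify that the resulting combination of gamma factors simplifies to the multivariate beta function associated with the projected parameter vector $k\tilde{p}_W$. No deeper analytical tool is needed beyond this change of variables and the Dirichlet normalization identity, so the argument is structurally short once the substitution is set up correctly.
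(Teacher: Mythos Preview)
Your proposal is correct and rests on the same idea as the paper's proof: marginalize out the coordinates $x_j$ with $j\notin W$ (which are unconstrained in $\Omega_1$) so that only a $|W|$-fold integral remains, and recognize the resulting normalizing constant as $1/\text{B}(k\tilde p_W)$. The only difference is in execution. The paper peels off the unconstrained coordinates one at a time, each step producing a univariate beta integral via the substitution $u=x_r/(1-\sum_{l<r}x_l)$ and accumulating gamma factors; you instead perform a single block substitution $y_j=x_j/s$ for all $j\in W'$ simultaneously and invoke the multivariate Dirichlet normalization directly. Your route is a bit shorter and avoids the induction-style iteration, while the paper's route makes the telescoping of the gamma factors more explicit. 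Either way the bookkeeping you flag (the exponent of $s$ and the cancellation down to $\text{B}(k\tilde p_W)$) is the only thing to check, and your sketch handles it correctly.
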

\begin{proof}
    Without loss of generality, take~$W = [|W|]$. 
    In order to find $\mathbb{P}[\mathcal{M}_D^{(k)}(p)\in\Omega_1]$, we need to integrate the Dirichlet PDF over the region $\mathcal{A}_n$. Therefore, we need to evaluate the $(n-1)$-fold integral
       \begin{equation}\label{CDF}\begin{aligned}
            {\dfrac{\bigintsss_{\mathcal{A}_{n}} \prod\limits_{i=1}^{n-1}x_i^{kp_i-1} \left(1-\sum\limits_{i=1}^{n-1}{x_i}\right)^{kp_n-1}dx_{n-1}\dots dx_1} {\text{B}(kp)}}.\end{aligned}
        \end{equation}
    Using a method similar to the one adopted in~\cite{rao1980incomplete}, let $y := \sum\limits_{i=1}^{n-2} x_i$. Then we can rewrite \eqref{CDF} as
    \begin{multline}\label{9}
        \frac{1}{\text{B}(kp)}\int_{\mathcal{A}_{n-2}}\int_{0}^{1-y} \prod\limits_{i=1}^{n-1}x_i^{kp_i-1} (1-y-x_{n-1})^{kp_{n}-1} \\ dx_{n-1} \dots dx_1.
    \end{multline}
    Now let $u := \frac{x_{n-1}}{1-y}$ and take the inner integral with respect to $u$. Then \eqref{9} becomes
    \begin{multline*}
         \frac{1}{\text{B}(kp)}\int_{\mathcal{A}_{n-2}}\prod\limits_{i=1}^{n-2} x_i^{kp_i-1}
         (1-y)^{k(p_{n-1}+p_n)-1}
         \int_{0}^{1}u^{kp_{n-1}-1}\\(1-u)^{kp_{n}-1}du \ dx_{n-2} \dots dx_1.
    \end{multline*}
    From the definition of the beta function, we have
    \begin{equation*}
        \int_{0}^{1}u^{kp_{n-1}-1}(1-u)^{kp_{n}-1}du = \text{beta}(kp_{n-1},kp_n).
    \end{equation*}
    Using the gamma function representation of beta functions, \ie
    \begin{equation}\label{beta}
        \text{beta}(a,b) = \frac{\Gamma(a)\Gamma(b)}{\Gamma(a+b)}, \ a,b\in\mathbb{R}_+,
    \end{equation}
    and \eqref{b}, we find that $\mathbb{P}[\mathcal{M}_D^{(k)}(p)\in\Omega_1]$ is equal to
    \begin{multline*}
         \frac{1}{\text{B}(kp)}\frac{\Gamma(kp_{n-1})\Gamma(kp_n)}{\Gamma(k(p_{n-1}+p_{n}))}
         \int_{\mathcal{A}_{n-2}} \prod\limits_{r=1}^{n-2}x_r^{kp_r-1}\\ \left(1-\sum\limits_{l=1}^{n-2}{x_l}\right)^{k(p_{n-1}+p_n)-1}dx_{n-2}\dots dx_1.
    \end{multline*}
    Using the same trick, for the next step, let $y := \sum\limits_{l=1}^{n-3} x_l$ and $u := \frac{x_{n-2}}{1-y}$. Then $\mathbb{P}[\mathcal{M}_D^{(k)}(p)\in\Omega_1]$ is equal to
    \begin{multline*}
        \frac{1}{\text{B}(kp)}\frac{\Gamma(kp_{n-2})\Gamma(kp_{n-1})\Gamma(kp_n)}{\Gamma(k(p_{n-2}+p_{n-1}+p_{n}))}\int_{\mathcal{A}_{n-3}} \prod\limits_{r=1}^{n-3}x_r^{kp_r-1}\\
        \left(1-\sum\limits_{l=1}^{n-3}{x_l}\right)^{k(p_{n-2}+p_{n-1}+p_n)-1}dx_{n-3}\dots dx_1.
        \end{multline*}
    We continue to adopt the same change of variable strategy until we are left with an integral over the region $\mathcal{A}_{| W|+1}$, which concludes the proof. 
    \end{proof} 
  
    In the previous lemma we showed how to calculate $\mathbb{P}[\mathcal{M}_D^{(k)}(p)\in\Omega_1]$. In particular we showed that instead of an $(n-1)$-fold integral of the Dirichlet PDF, the computations can be reduced to a $|W|$-fold integral. However, the expression still depends on the input vector $p$, which is undesirable and generally incompatible with differential privacy. The reason is that $(\epsilon,\delta)$-differential privacy is a guarantee for all adjacent input data and not for a specific data point. In the next lemma, we show that $\mathbb{P}[\mathcal{M}_D^{(k)}(p)\in\Omega_1]$ is a log-concave function of $p$ over the region $\Delta_n^{(\eta,\Bar{\eta})}$, which we will use to derive
    a bound for~$\delta$ that holds for all~$p$ of interest. 

\begin{lemma}\label{lemma3}
Let $W$ be a given set of indices which is used to construct $\Delta_n^{(\eta,\Bar{\eta})}$ and let $\mathcal{M}^{(k)}_D$ be the Dirichlet mechanism with parameter $k$ and input $p$. Then, $\mathbb{P}[\mathcal{M}_D^{(k)}(p)\in\Omega_1]$ is a log-concave function of $p$ over the domain $\Delta_n^{(\eta,\Bar{\eta})}$.
\end{lemma}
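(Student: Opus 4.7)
The plan is to use the explicit expression from Lemma~\ref{lemma2} and invoke a generalization of Pr\'ekopa's Theorem~6 (established in the extended version~\cite{extended}) to obtain log-concavity of the probability in $p$. By Lemma~\ref{lemma2},
$$\mathbb{P}[\mathcal{M}_D^{(k)}(p)\in\Omega_1] = \frac{N(p)}{\text{B}(k\Tilde{p}_W)},$$
where $N(p)$ denotes the $|W|$-fold integral of the kernel $\prod_{i\in W}x_i^{kp_i-1}\bigl(1-\sum_{i\in W}x_i\bigr)^{k(1-\sum_{i\in W}p_i)-1}$ over the convex region $\mathcal{A}_{|W|+1}$. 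It then suffices to show that $\log N(p) - \log \text{B}(k\Tilde{p}_W)$ is concave in $p$ on $\Delta_n^{(\eta,\Bar{\eta})}$.

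For the normalizing factor, the identity $\text{B}(k\Tilde{p}_W) = \prod_{i\in W}\Gamma(kp_i)\cdot\Gamma\bigl(k(1-\sum_{i\in W}p_i)\bigr)/\Gamma(k)$ together with the classical log-convexity of the Gamma function shows that $\log \text{B}(k\Tilde{p}_W)$ is a convex function of $p$, since each summand is an affine composition of a convex function.

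For the numerator $N(p)$, I would apply the generalization of Pr\'ekopa's Theorem~6. The integrand is log-concave in $x$ whenever $kp_i-1 \geq 0$ for $i\in W$ and $k(1-\sum_{i\in W}p_i)-1 \geq 0$, which is ensured on $\Delta_n^{(\eta,\Bar{\eta})}$ by the constraints $p_i\geq\eta$ and $\sum_{i\in W}p_i\leq 1-\Bar{\eta}$ together with $k$ chosen sufficiently large. Moreover, the logarithm of the integrand is affine in $p$ for each fixed $x$, placing the integrand in the exponential-family setting of the generalized theorem, which is tailored to deliver log-concavity of the ratio $N(p)/\text{B}(k\Tilde{p}_W)$ as a whole.

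The main obstacle is that the classical Pr\'ekopa theorem does not apply here: the integrand is not jointly log-concave in $(x,p)$, since the bilinear term $kp_i\log x_i$ has an indefinite Hessian in $(p_i,x_i)$, and in fact $\log N(p)$ on its own is a log-Laplace transform and therefore \emph{convex} in $p$. Any term-by-term analysis of numerator and denominator thus fails to deliver concavity of the difference $\log N - \log \text{B}$. The generalization established in~\cite{extended} overcomes this by combining the log-concavity in $x$ of the Dirichlet kernel with the log-affine dependence on $p$ and leveraging the Gamma-function log-convexity, so that the convexity of $\log \text{B}(k\Tilde{p}_W)$ strictly dominates that of $\log N(p)$ on the restricted domain, yielding the desired log-concavity.
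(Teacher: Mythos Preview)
Your proposal correctly identifies the central difficulty --- the bare numerator $N(p)$ is a log-Laplace transform and hence log-\emph{convex}, so a term-by-term argument on $\log N - \log\mathrm{B}$ cannot work --- but your resolution is circular. The ``generalization of Pr\'ekopa's Theorem~6 established in~\cite{extended}'' that you invoke \emph{is} the proof of Lemma~\ref{lemma3}; there is no separate off-the-shelf theorem you can cite. Your final paragraph describes a wish (``the convexity of $\log\mathrm{B}$ strictly dominates that of $\log N$'') rather than a mechanism, and does not explain how log-concavity in $x$ interacts with the $p$-dependence to produce the needed domination.

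The paper's actual argument avoids the numerator/denominator split entirely. It works with the full density
\[
f(x,p)=\frac{1}{\mathrm{B}(k\tilde p_W)}\prod_{i\in W}x_i^{kp_i-1}\Bigl(1-\textstyle\sum_{i\in W}x_i\Bigr)^{k(1-\sum_{i\in W}p_i)-1}
\]
and establishes two facts: (i) for fixed $p$, $f(\cdot,p)$ is log-concave in $x$ (standard, since all exponents are $\geq 0$ on $\Delta_n^{(\eta,\bar\eta)}$); and (ii) for fixed $x$, $f(x,\cdot)$ is log-concave in $p$. Point (ii) is the key step you are missing: because the kernel's log is \emph{affine} in $p$, the Hessian of $\log f(x,\cdot)$ comes entirely from $-\log\mathrm{B}(k\tilde p_W)$, and a direct computation with the trigamma function shows this Hessian is negative semidefinite. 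Having separate log-concavity in each variable, the paper then chains the two one-variable inequalities to obtain a four-factor lower bound on $f(t_x,p)$ of the exact shape required by Pr\'ekopa's integral inequality (Theorem~6 of~\cite{prekopa1972}), and integrates in $x$ over $\mathcal{A}_{|W|+1}$ to conclude. The crucial move is therefore not a comparison of convexities, but keeping the normalizer attached to the kernel so that log-concavity in $p$ holds \emph{pointwise in $x$}.
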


The proof of this lemma may be found in the extended version at \cite{extended}. 
Revisiting the definition of $\Omega_1,\Omega_2$ in \eqref{omega1} and \eqref{omega2}, we find that 
\begin{align}\label{delta}
    \mathbb{P}[\mathcal{M}_D^{(k)}(p)\in\Omega_2] &= 1-\mathbb{P}[\mathcal{M}_D^{(k)}(p)\in\Omega_1]\\
    &\leq 1- \min\limits_{p} \mathbb{P}[\mathcal{M}_D^{(k)}(p)\in\Omega_1]\nonumber =\delta.
\end{align}
From this, we see that bounding~$\delta$ can be by
minimizing~$\mathbb{P}[\mathcal{M}_D^{(k)}(p)\in\Omega_1]$, an explicit form
of which was given in Lemma~\ref{lemma2}. 
Above, we established the log-concavity of the function that we seek to minimize. As a result, instead of minimizing $\mathbb{P}[\mathcal{M}_D^{(k)}(p)\in\Omega_1]$ over the entire continuous domain of $\Delta_n^{(\eta,\Bar{\eta})}$, we can only consider the extreme points. Note that the set of points within $\Delta_n^{(\eta,\Bar{\eta})}$ form a polyhedron with at most $|W|(|W|+1)/2$ vertices. As the minimum of an unsorted list of $n$ entries can be found in linear time, the time complexity of finding $\min \mathbb{P}[\mathcal{M}_D^{(k)}(p)\in\Omega_1]$ is $\mathcal{O}(|W|^2)$. 
This analytical bound will be further explored through numerical results
in Section~\ref{sim}. Next, we develop analogous bounds for~$\epsilon$.

\subsection{Computing~$\epsilon$}
As above, fix~$\eta, \bar{\eta} \in (0, 1)$,~$b \in (0, 1]$, and~$W \subseteq [n-1]$.
Then, for a given $k\in\mathbb{R}_+$, bounding~$\epsilon$ requires evaluating the term
\begin{equation*} \label{4}
    \log{\left(\frac{\mathbb{P}[\mathcal{M}_D^{(k)}(p) = x]}{\mathbb{P}[\mathcal{M}_D^{(k)}(q) = x]}\right)},
\end{equation*}
where $p$ and $q$ are any $b$-adjacent vectors in~$\Delta_n^{(\eta,\Bar{\eta})}$. 
Let $i,j \in W$ be the indices in which $p$ and $q$ differ. Using the definition of the Dirichlet mechanism, we find
\begin{align*}\label{epsilon}
    &\log{\left(\frac{\mathbb{P}[\mathcal{M}_D^{(k)}(p) = x]}{\mathbb{P}[\mathcal{M}_D^{(k)}(q) = x]}\right)} = \log{\left(\frac{\text{B}(kq) \prod\limits_{i=1}^n x_i^{kp_i-1}}{\text{B}(kp) \prod\limits_{i=1}^n x_i^{kq_i-1}}\right)}\nonumber \\
    &\hspace{0.6in} = \log{\left(\frac{\Gamma(kq_i)\Gamma(kq_j) x_i^{kp_i-1} x_j^{kp_j-1}}{\Gamma(kp_i)\Gamma(kp_j) x_i^{kq_i-1} x_j^{kq_j-1}}\right)}\nonumber\\
    &\hspace{0.6in} = \log{\left(\frac{\Gamma(kq_i)\Gamma(kq_j)}{\Gamma(kp_i)\Gamma(kp_j) }x_i^{k(p_i-q_i)}x_j^{k(p_j-q_j)}\right)}.
\end{align*}
Since $p$ and $q$ are $b$-adjacent, we have that $p_i + p_j = q_i + q_j$. Therefore, we can compute~$\epsilon$ by evaluating the term
%\begin{equation}\label{1}\begin{split}
%     p_i + p_j = q_i + q_j. 
% \end{split}
%\end{equation}
%Combining \eqref{4}, \eqref{epsilon} and \eqref{1},
\begin{equation}\label{7}
    \log{\left(\frac{\Gamma(kq_i)\Gamma(kq_j)}{\Gamma(kp_i)\Gamma(kp_j)} \left(\frac{x_i}{x_j}\right)^{k(p_i-q_i)}\right)}.
\end{equation}
Note that if either $x_i$ or $x_j$ goes to 0, then the term in \eqref{7} would be unbounded. 
Recalling that the indices in which $p$ and $q$ can differ at are restricted to the set $W$, we
find that the values at these indices must be bounded below by~$\eta$, and therefore the
ratios of interest remain bounded as well. 

Lemma \ref{lemma5} below will provide an explicit value of~$\epsilon$,
aided in part by the next lemma. 

\begin{lemma}\label{c}
Let $W$ be a given set of indices which is used to construct $\Delta_n^{(\eta,\Bar{\eta})}$ and let $p,q$ be any $b$-adjacent vectors in $\Delta_n^{(\eta,\Bar{\eta})}$ with their $i^{\text{th}}$ and $j^{\text{th}}$ entries different. Then, for a constant $k\in\mathbb{R}_+$, we have that
\begin{equation*}\emph{$
    \frac{\text{beta}(kq_i,kq_j)}{\text{beta}(kp_i,kp_j)} \leq\frac{\text{beta}(kq_i,k(1-\Bar{\eta}-q_i))}{\text{beta}(kp_i,k(1-\Bar{\eta}-p_i))}$}.
\end{equation*}
\end{lemma}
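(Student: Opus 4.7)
The plan is to reduce both sides to ratios of gamma functions, exploit the equal-sum structure coming from adjacency, and then invoke a monotonicity property of a gamma-function ratio.

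First, I would apply the identity \eqref{beta} to all four beta factors. On the left-hand side, the $b$-adjacency of $p$ and $q$ at coordinates $i,j$ gives $p_i+p_j=q_i+q_j$, so the common denominator factor $\Gamma(k(p_i+p_j))$ cancels, leaving $\Gamma(kq_i)\Gamma(kq_j)/[\Gamma(kp_i)\Gamma(kp_j)]$. On the right-hand side the two arguments of each beta sum to $k(1-\Bar{\eta})$, so $\Gamma(k(1-\Bar{\eta}))$ cancels, leaving $\Gamma(kq_i)\Gamma(k(1-\Bar{\eta}-q_i))/[\Gamma(kp_i)\Gamma(k(1-\Bar{\eta}-p_i))]$. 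After canceling the common factor $\Gamma(kq_i)/\Gamma(kp_i)$ from the resulting inequality, the claim reduces to
\begin{equation*}
\frac{\Gamma(kq_j)}{\Gamma(kp_j)} \le \frac{\Gamma(k(1-\Bar{\eta}-q_i))}{\Gamma(k(1-\Bar{\eta}-p_i))}.
\end{equation*}

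Next I would introduce the slack $c := (1-\Bar{\eta})-p_i-p_j$, which is nonnegative because $p\in\Delta_n^{(\eta,\Bar{\eta})}$ forces $p_i+p_j\le \sum_{\ell\in W}p_\ell\le 1-\Bar{\eta}$; the adjacency identity $q_i+q_j=p_i+p_j$ yields the same $c$ for $q$. Substituting $1-\Bar{\eta}-p_i=p_j+c$ and $1-\Bar{\eta}-q_i=q_j+c$, the inequality becomes $f(kp_j)\le f(kq_j)$, where $f(x):=\Gamma(x+kc)/\Gamma(x)$. I would then show that $f$ is nondecreasing on $\mathbb{R}_+$ by computing $(\log f)'(x)=\psi(x+kc)-\psi(x)\ge 0$, where $\psi:=\Gamma'/\Gamma$ is the digamma function; this follows from the fact that $\psi$ is monotone increasing on $\mathbb{R}_+$, a standard consequence of the log-convexity of $\Gamma$.

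The subtlety I expect to be the main obstacle is not this monotonicity but the bookkeeping around the labeling of the two differing indices: the statement is not symmetric under swapping $i$ and $j$, and the reduction above only yields the claim when $p_j\le q_j$. Since the $b$-adjacency relation in Definition \ref{adjacency} is symmetric in its two differing indices, I would handle this at the outset by fixing the convention that $i$ denotes the coordinate in which $p_i\ge q_i$ (equivalently $p_j\le q_j$), under which the monotonicity of $f$ delivers the inequality.
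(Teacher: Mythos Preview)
Your proposal is correct and follows essentially the same route as the paper's proof: both reduce via \eqref{beta} and the adjacency constraint $p_i+p_j=q_i+q_j$, then use the monotonicity of the digamma function to obtain the needed gamma-ratio inequality, and both observe that the labeling convention $q_i\le p_i$ (equivalently $p_j\le q_j$) must be fixed for the stated form to hold. The only cosmetic difference is the parametrization---the paper views the ratio $\Gamma(k(c'-q_i))/\Gamma(k(c'-p_i))$ as increasing in the sum $c'$ up to $1-\bar\eta$, whereas you introduce the slack $c=(1-\bar\eta)-(p_i+p_j)$ and study $f(x)=\Gamma(x+kc)/\Gamma(x)$---but these are equivalent reparametrizations of the same monotonicity argument.
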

The proof of this lemma may be found in the extended version of this paper at \cite{extended}.
\begin{lemma}\label{lemma5}
Let $W$ be a given set of indices which is used to construct $\Delta_n^{(\eta,\Bar{\eta})}$ and $\mathcal{M}^{(k)}_D$ be a Dirichlet mechanism with parameter $k$. Then, for all $x \in \Omega_1$ we have that
\begin{align*}
    &\log{\left(\frac{\mathbb{P}[\mathcal{M}_D^{(k)}(p) = x]}{\mathbb{P}[\mathcal{M}_D^{(k)}(q) = x]}\right)} \leq\\
    &\hspace{0.5in}\log\left({\frac{\text{\emph{beta}}(k\eta,k(1-\Bar{\eta}-\eta))}{\text{\emph{beta}}(k(\eta+{\frac{b}{2}}),k(1-\Bar{\eta}-\eta-{\frac{b}{2}}))}}\right)\\ &\hspace{0.5in}+{\frac{kb}{2}}\log\left(\frac{1-(|W|-1))\gamma}{\gamma}\right),
    \end{align*}
    where the parameter~$\gamma \in (0, 1)$ takes the same value of~$\gamma$
used to compute~$\delta$ in Section~\ref{ss:identity_delta}. 
\begin{proof}
Let
\begin{equation}\label{op}
    \begin{aligned}
    &v := & &\max\limits_{p,q,x} \ \  \log \left(\frac{\Gamma(kq_i)\Gamma(kq_j)}{\Gamma(kp_i)\Gamma(kp_j)}\left(\frac{x_i}{x_j}\right)^{k(p_i-q_i)}\right)\\
    & & &\begin{aligned}
    &\text{subject to} & &|p_i-q_i|\leq \frac{b}{2},\\
    & & &p_i+p_j = q_i + q_j,\\
    & & &p_i+p_j \leq 1- \Bar{\eta},\\
    & & &p_{(i,j)}\in [\eta,1-\Bar{\eta}-\eta],\\
    & & &q_{(i,j)}\in [\eta,1-\Bar{\eta}-\eta],\\
    & & &x_{(i,j)}\in \left[\gamma,1-(|W|-1)\gamma\right],
    \end{aligned}
    \end{aligned}
\end{equation}
and let $\mathcal{C}$ denote the set of feasible points of the optimization problem in \eqref{op}; we note that the first constraint enforces adjacency, while the others
encode~$p, q \in \Delta_n^{(\eta,\Bar{\eta})}$ and~$x \in \Omega_1$. 

By sub-additivity of the maximum, we have
\begin{multline}\label{v}
     v \leq \max\limits_{p,q,x \in \mathcal{C}} \log\left({\frac{\Gamma(kq_i)\Gamma(kq_j)}{\Gamma(kp_i)\Gamma(kp_j)}}\right)+\\
     \max\limits_{p,q,x \in \mathcal{C}} \log{\left(\frac{x_i}{x_j}\right)^{k(p_i-q_i)}}.
\end{multline}
Now, with
\begin{align*}
    v_1 := \max\limits_{p,q,x \in \mathcal{C}} \log{\left(\frac{x_i}{x_j}\right)^{k(p_i-q_i)}},
\end{align*}
we find
\begin{align*}
    v_1 &\leq \max\limits_{p,q,x \in \mathcal{C}} \left|k(p_i-q_i)\right|\left|\log\left(\frac{x_i}{x_j}\right)\right|\\
    &= \frac{kb}{2}\log\left(\frac{1-(|W|-1)\gamma}{\gamma}\right).
\end{align*}

Next, let $c := p_i + p_j = q_i + q_j$ and substitute $q_j,p_j$ with $c-q_i$ and $c-p_i$ respectively. Let
\begin{equation*}
    \begin{aligned}
    &v_2 := & &\max\limits_{p_i,q_i,c} \ \ \log\left( \frac{\Gamma(kq_i)\Gamma(k(c-q_i))}{\Gamma(kp_i)\Gamma(k(c-p_i))}\right) \\
    & & &\begin{aligned}
    &\text{subject to} & & \left|p_i-q_i\right| \leq \frac{b}{2},\\
    & & & c \in [2\eta,1-\Bar{\eta}],\\
    & & & p_i \in [\eta,1-\Bar{\eta}-\eta],\\
    & & & q_i \in [\eta,1-\Bar{\eta}-\eta],
    \end{aligned}
    \end{aligned}
\end{equation*}
where the constraints again encode adjacency of~$p$ and~$q$
and their containment in~$\Delta_n^{(\eta,\Bar{\eta})}$.

Then, from Lemma \ref{c} and Equation~\eqref{beta}, we have that
\begin{equation}\label{op2}
    \begin{aligned}
    &v_2 \leq & &\max\limits_{p_i,q_i,c} \ \ \log \left( \frac{\text{beta}(kq_i,k(1-\Bar{\eta}-q_i))}{\text{beta}(kp_i,k(1-\Bar{\eta}-p_i))}\right) \\
    & & &\begin{aligned}
    &\text{subject to} & & \left|p_i-q_i\right| \leq \frac{b}{2},\\
    & & & p_i \in [\eta,1-\Bar{\eta}-\eta],\\
    & & & q_i \in [\eta,1-\Bar{\eta}-\eta].\end{aligned}
    \end{aligned}
\end{equation}
Evaluating the gradient of the objective function in the optimization problem in \eqref{op2}, it can be shown that the Karush-Kuhn-Tucker (KKT) conditions of optimality are not satisfied in the interior of the set of feasible points except for points that lie on the line $p_i = q_i$. However, since the KKT conditions are only sufficient conditions (see chapter 11 of \cite{Boyd:2004:CO:993483}), satisfying them does not imply optimality which is indeed the case here. 

Evaluating points on the boundary of the feasible region shows that KKT conditions are also not satisfied, thus, the only points remaining are $(p_i,q_i)$'s in the set
\begin{multline} \label{eq:piqiset}
\left\{\left(\eta+\frac{b}{2},\eta\right),\left(1-\Bar{\eta}-\eta-\frac{b}{2},1-\Bar{\eta}-\eta\right),\right.\\
\left.\left(\eta,\eta+\frac{b}{2}\right),\left(1-\Bar{\eta}-\eta,1-\Bar{\eta}-\eta-\frac{b}{2}\right)\right\}.
\end{multline}
Note that since~$\text{beta}(a,b) = \text{beta}(b,a)$, the points in the first row give equal positive objectives and the points in the second row have equal negative objectives. Hence, we can choose the first point in Equation~\eqref{eq:piqiset} to find 
\begin{equation}\label{v1}
    v_2 = \log\left({\frac{\text{beta}(k\eta,k(1-\Bar{\eta}-\eta))}{\text{beta}(k(\eta+{\frac{b}{2}}),k(1-\Bar{\eta}-\eta-{\frac{b}{2}}))}}\right).
\end{equation}
Substituting $v_1$ and $v_2$ in \eqref{v} concludes the proof. 
\end{proof}
\end{lemma}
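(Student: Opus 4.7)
The plan is to recognize that once the log-ratio has been simplified to the form in Equation~\eqref{7}, finding the tightest $\epsilon$ amounts to maximizing the right-hand side over all $b$-adjacent pairs $p, q \in \Delta_n^{(\eta,\bar{\eta})}$ differing in coordinates $i, j \in W$, together with all $x \in \Omega_1$. The adjacency condition together with $p_i + p_j = q_i + q_j$ translates into $|p_i - q_i| = |q_j - p_j| \leq b/2$; membership in $\Delta_n^{(\eta,\bar{\eta})}$ gives $p_i, p_j, q_i, q_j \in [\eta, 1 - \bar{\eta} - \eta]$ and $p_i + p_j \leq 1 - \bar{\eta}$; and $x \in \Omega_1$ forces $x_i, x_j \in [\gamma, 1 - (|W|-1)\gamma]$ since the remaining coordinates of $x$ indexed by $W$ must each be at least $\gamma$. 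I would then exploit subadditivity of the maximum to decompose the problem into two independent optimizations: one over the gamma-function ratio $\Gamma(kq_i)\Gamma(kq_j)/(\Gamma(kp_i)\Gamma(kp_j))$, and one over $(x_i/x_j)^{k(p_i - q_i)}$.

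The $x$-dependent piece is the routine one. Writing it as $\exp\bigl(k(p_i - q_i)\log(x_i/x_j)\bigr)$ and bounding the exponent by $|k(p_i - q_i)| \cdot |\log(x_i/x_j)|$, the maximum is attained when $|p_i - q_i| = b/2$ and the ratio $x_i/x_j$ is as extreme as possible, namely when one of $x_i, x_j$ equals $\gamma$ and the other equals $1 - (|W|-1)\gamma$. This immediately yields the second term in the claimed bound.

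The main obstacle is the gamma-function piece, where the dependence on the unknown $c := p_i + p_j = q_i + q_j$ must be removed. The key move is to rewrite $\Gamma(kq_i)\Gamma(k(c - q_i))/\bigl(\Gamma(kp_i)\Gamma(k(c - p_i))\bigr)$ as the ratio of beta functions $\text{beta}(kq_i, k(c - q_i))/\text{beta}(kp_i, k(c - p_i))$ times a factor depending only on $c$, using Equation~\eqref{beta}. Lemma~\ref{c} then bounds this beta-ratio by the one with $c$ replaced by the extreme value $1 - \bar{\eta}$, eliminating $c$ from the optimization entirely. What remains is a two-variable problem over $(p_i, q_i) \in [\eta, 1 - \bar{\eta} - \eta]^2$ subject to $|p_i - q_i| \leq b/2$, and I would analyze its KKT conditions to show that the objective has no interior or edge-interior critical points, leaving only the four corner candidates at which the adjacency constraint is tight and one of $p_i, q_i$ sits at an endpoint of its range.

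Finally, the symmetry $\text{beta}(a, b) = \text{beta}(b, a)$ groups these four candidates into two pairs giving equal objective values with opposite signs, so the maximum is attained at $(p_i, q_i) = (\eta + b/2, \eta)$. Plugging this in recovers the first term of the bound, and combining the two maxima via the subadditivity step completes the argument. The delicate point I expect to have to justify carefully is the KKT analysis ruling out non-corner maximizers, since the gradient of the log-beta-ratio does not vanish along generic interior points but may along the diagonal $p_i = q_i$, which however gives a trivial value of zero that is dominated by the corner value.
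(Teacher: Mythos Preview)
Your proposal is correct and follows essentially the same route as the paper's proof: set up the constrained maximization, split via subadditivity into the $(x_i/x_j)$-term and the gamma-ratio term, bound the first directly and the second by passing to the beta ratio, invoke Lemma~\ref{c} to push $c$ to $1-\bar\eta$, and then use a KKT/boundary analysis together with the symmetry $\text{beta}(a,b)=\text{beta}(b,a)$ to identify the maximizing corner $(p_i,q_i)=(\eta+b/2,\eta)$. One tiny clarification: because $p_i+p_j=q_i+q_j=c$, the $\Gamma(kc)$ factors cancel and the gamma ratio \emph{equals} the beta ratio exactly, so the ``factor depending only on $c$'' you mention is simply $1$.
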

\begin{figure*}
% This file was created by tikzplotlib v0.8.2.
\begin{tikzpicture}

\definecolor{color0}{rgb}{0.87843137254902,1,1}
\definecolor{color1}{rgb}{0.43921568627451,0.501960784313725,0.564705882352941}

\begin{groupplot}[group style={group name = my plots, group size=3 by 1}, width=2.5in, height = 1.4in]
\nextgroupplot[
axis background/.style={fill=white!89.80392156862746!black},
axis line style={white},
tick align=outside,
tick pos=both,
x grid style={white},
xticklabel style={
  /pgf/number format/precision=3,
  /pgf/number format/fixed},
xlabel={\(\displaystyle \delta\)},
xmajorgrids,
xmin=0.01, xmax=0.105,
xtick style={color=white!33.33333333333333!black},
y grid style={white},
ylabel={\(\displaystyle \epsilon\)},
ymajorgrids,
ymin=0, ymax=6,
ytick style={color=white!33.33333333333333!black}
]
\path [fill=color0, fill opacity=0.5, very thin]
(axis cs:0.01,2.4849)
--(axis cs:0.01,2.4849)
--(axis cs:0.015,2.45504)
--(axis cs:0.02,2.4389)
--(axis cs:0.025,2.4124)
--(axis cs:0.03,2.38771)
--(axis cs:0.035,2.36458)
--(axis cs:0.04,2.34281)
--(axis cs:0.045,2.32224)
--(axis cs:0.05,2.30275)
--(axis cs:0.055,2.28413)
--(axis cs:0.06,2.29711)
--(axis cs:0.065,2.25173)
--(axis cs:0.07,2.23417)
--(axis cs:0.075,2.21743)
--(axis cs:0.08,2.2033)
--(axis cs:0.085,2.18897)
--(axis cs:0.09,2.17515)
--(axis cs:0.095,2.16179)
--(axis cs:0.1,2.14887)
--(axis cs:0.105,2.13636)
--(axis cs:0.105,2.13636)
--(axis cs:0.105,4.66383)
--(axis cs:0.1,4.67634)
--(axis cs:0.095,4.68926)
--(axis cs:0.09,4.70262)
--(axis cs:0.085,4.71644)
--(axis cs:0.08,4.73077)
--(axis cs:0.075,4.74491)
--(axis cs:0.07,4.76164)
--(axis cs:0.065,4.7792)
--(axis cs:0.06,4.82458)
--(axis cs:0.055,4.8116)
--(axis cs:0.05,4.83022)
--(axis cs:0.045,4.84972)
--(axis cs:0.04,4.87028)
--(axis cs:0.035,4.89205)
--(axis cs:0.03,4.91518)
--(axis cs:0.025,4.93988)
--(axis cs:0.02,4.96637)
--(axis cs:0.015,4.98251)
--(axis cs:0.01,5.01238)
--cycle;

\addplot [line width=0.8400000000000001pt, color1, dashed]
table {%
0.01 2.4849
0.015 2.45504
0.02 2.4389
0.025 2.4124
0.03 2.38771
0.035 2.36458
0.04 2.34281
0.045 2.32224
0.05 2.30275
0.055 2.28413
0.06 2.29711
0.065 2.25173
0.07 2.23417
0.075 2.21743
0.08 2.2033
0.085 2.18897
0.09 2.17515
0.095 2.16179
0.1 2.14887
0.105 2.13636
};
\addplot [line width=0.8400000000000001pt, color1]
table {%
0.01 5.01238
0.015 4.98251
0.02 4.96637
0.025 4.93988
0.03 4.91518
0.035 4.89205
0.04 4.87028
0.045 4.84972
0.05 4.83022
0.055 4.8116
0.06 4.82458
0.065 4.7792
0.07 4.76164
0.075 4.74491
0.08 4.73077
0.085 4.71644
0.09 4.70262
0.095 4.68926
0.1 4.67634
0.105 4.66383
};

\nextgroupplot[
legend columns=2,
legend style={at={(0.5,1.152)}, anchor=south, draw=white!80.0!black, fill=white!100!black},
axis background/.style={fill=white!89.80392156862746!black},
axis line style={white},
tick align=outside,
tick pos=both,
x grid style={white},
xticklabel style={
  /pgf/number format/precision=3,
  /pgf/number format/fixed},
xlabel={\(\displaystyle \delta\)},
xmajorgrids,
xmin=0.01, xmax=0.105,
xtick style={color=white!33.33333333333333!black},
y grid style={white},
ymajorgrids,
ymin=0, ymax=6,
ytick style={color=white!33.33333333333333!black}
]
\addlegendentry{Approximated}

\addlegendentry{No approximation}
\path [fill=color0, fill opacity=0.5, very thin]
(axis cs:0.01,2.09428)
--(axis cs:0.01,2.09428)
--(axis cs:0.015,1.99085)
--(axis cs:0.02,1.93699)
--(axis cs:0.025,1.87384)
--(axis cs:0.03,1.82441)
--(axis cs:0.035,1.78217)
--(axis cs:0.04,1.74865)
--(axis cs:0.045,1.72793)
--(axis cs:0.05,1.70029)
--(axis cs:0.055,1.67526)
--(axis cs:0.06,1.65237)
--(axis cs:0.065,1.63128)
--(axis cs:0.07,1.6117)
--(axis cs:0.075,1.59344)
--(axis cs:0.08,1.5763)
--(axis cs:0.085,1.56016)
--(axis cs:0.09,1.54492)
--(axis cs:0.095,1.53046)
--(axis cs:0.1,1.51905)
--(axis cs:0.105,1.50467)
--(axis cs:0.105,1.50467)
--(axis cs:0.105,2.71655)
--(axis cs:0.1,2.73092)
--(axis cs:0.095,2.74233)
--(axis cs:0.09,2.75679)
--(axis cs:0.085,2.77204)
--(axis cs:0.08,2.78817)
--(axis cs:0.075,2.80532)
--(axis cs:0.07,2.82358)
--(axis cs:0.065,2.84315)
--(axis cs:0.06,2.86425)
--(axis cs:0.055,2.88714)
--(axis cs:0.05,2.91217)
--(axis cs:0.045,2.93981)
--(axis cs:0.04,2.96053)
--(axis cs:0.035,2.99405)
--(axis cs:0.03,3.03629)
--(axis cs:0.025,3.08572)
--(axis cs:0.02,3.14887)
--(axis cs:0.015,3.20273)
--(axis cs:0.01,3.30616)
--cycle;

\addplot [line width=0.8400000000000001pt, color1, dashed]
table {%
0.01 2.09428
0.015 1.99085
0.02 1.93699
0.025 1.87384
0.03 1.82441
0.035 1.78217
0.04 1.74865
0.045 1.72793
0.05 1.70029
0.055 1.67526
0.06 1.65237
0.065 1.63128
0.07 1.6117
0.075 1.59344
0.08 1.5763
0.085 1.56016
0.09 1.54492
0.095 1.53046
0.1 1.51905
0.105 1.50467
};
\addplot [line width=0.8400000000000001pt, color1]
table {%
0.01 3.30616
0.015 3.20273
0.02 3.14887
0.025 3.08572
0.03 3.03629
0.035 2.99405
0.04 2.96053
0.045 2.93981
0.05 2.91217
0.055 2.88714
0.06 2.86425
0.065 2.84315
0.07 2.82358
0.075 2.80532
0.08 2.78817
0.085 2.77204
0.09 2.75679
0.095 2.74233
0.1 2.73092
0.105 2.71655
};

\nextgroupplot[
axis background/.style={fill=white!89.80392156862746!black},
axis line style={white},
tick align=outside,
tick pos=both,
x grid style={white},
xticklabel style={
  /pgf/number format/precision=3,
  /pgf/number format/fixed},
xlabel={\(\displaystyle \delta\)},
xmajorgrids,
xmin=0.01, xmax=0.105,
xtick style={color=white!33.33333333333333!black},
y grid style={white},
ymajorgrids,
ymin=0, ymax=6,
ytick style={color=white!33.33333333333333!black}
]
\path [fill=color0, fill opacity=0.5, very thin]
(axis cs:0.01,1.52988)
--(axis cs:0.01,1.52988)
--(axis cs:0.015,1.43765)
--(axis cs:0.02,1.38963)
--(axis cs:0.025,1.3413)
--(axis cs:0.03,1.30353)
--(axis cs:0.035,1.27093)
--(axis cs:0.04,1.24891)
--(axis cs:0.045,1.21951)
--(axis cs:0.05,1.20292)
--(axis cs:0.055,1.18316)
--(axis cs:0.06,1.16504)
--(axis cs:0.065,1.14832)
--(axis cs:0.07,1.1328)
--(axis cs:0.075,1.11831)
--(axis cs:0.08,1.10386)
--(axis cs:0.085,1.09224)
--(axis cs:0.09,1.08016)
--(axis cs:0.095,1.06785)
--(axis cs:0.1,1.05618)
--(axis cs:0.105,1.04538)
--(axis cs:0.105,1.04538)
--(axis cs:0.105,1.51113)
--(axis cs:0.1,1.52193)
--(axis cs:0.095,1.5336)
--(axis cs:0.09,1.54591)
--(axis cs:0.085,1.558)
--(axis cs:0.08,1.56961)
--(axis cs:0.075,1.58406)
--(axis cs:0.07,1.59855)
--(axis cs:0.065,1.61408)
--(axis cs:0.06,1.63079)
--(axis cs:0.055,1.64891)
--(axis cs:0.05,1.66867)
--(axis cs:0.045,1.68526)
--(axis cs:0.04,1.71466)
--(axis cs:0.035,1.73669)
--(axis cs:0.03,1.76928)
--(axis cs:0.025,1.80705)
--(axis cs:0.02,1.85538)
--(axis cs:0.015,1.90341)
--(axis cs:0.01,1.99563)
--cycle;

\addplot [line width=0.8400000000000001pt, color1, dashed]
table {%
0.01 1.52988
0.015 1.43765
0.02 1.38963
0.025 1.3413
0.03 1.30353
0.035 1.27093
0.04 1.24891
0.045 1.21951
0.05 1.20292
0.055 1.18316
0.06 1.16504
0.065 1.14832
0.07 1.1328
0.075 1.11831
0.08 1.10386
0.085 1.09224
0.09 1.08016
0.095 1.06785
0.1 1.05618
0.105 1.04538
};
\addplot [line width=0.8400000000000001pt, color1]
table {%
0.01 1.99563
0.015 1.90341
0.02 1.85538
0.025 1.80705
0.03 1.76928
0.035 1.73669
0.04 1.71466
0.045 1.68526
0.05 1.66867
0.055 1.64891
0.06 1.63079
0.065 1.61408
0.07 1.59855
0.075 1.58406
0.08 1.56961
0.085 1.558
0.09 1.54591
0.095 1.5336
0.1 1.52193
0.105 1.51113
};
\end{groupplot}
\node[text width=6cm,align=center,anchor=north] at ([yshift=-8mm]my plots c1r1.south) {\captionof{subfigure}{$\eta = 0.15$, $\Bar{\eta} = 0.15$, $k=6.7$ \label{subplot:one}}};
\node[text width=6cm,align=center,anchor=north] at ([yshift=-8mm]my plots c2r1.south) {\captionof{subfigure}{$\eta = 0.20$, $\Bar{\eta} = 0.20$, $k=5.1$\label{subplot:two}}};
\node[text width=6cm,align=center,anchor=north] at ([yshift=-8mm]my plots c3r1.south) {\captionof{subfigure}{$\eta = 0.25$, $\Bar{\eta} = 0.25$, $k=4.1$\label{subplot:three}}};

\end{tikzpicture}
\caption{An example where $|W|=3$ to compare the approximated and original values of $\epsilon$. At each level of $\delta$, first $\gamma$ is optimized according to the optimization problem in \eqref{op4}, then the optimal $\gamma$ is substituted in the expressions for the original and approximated values.}\label{approxfig}
\end{figure*}

We now state the main theorem of this section, which formally establishes the $(\epsilon,\delta)$-differential privacy of the Dirichlet mechanism for identity queries.
\begin{theorem}\label{theorem1}
Fix~$\eta, \bar{\eta} \in (0, 1)$ and $b \in (0, 1]$, and consider~$b$-adjacent 
vectors~$p, q \in \Delta_n^{(\eta,\Bar{\eta})}$. 
Let~$W \subseteq [n-1]$ be a given set of indices which is used to construct $\Delta_n^{(\eta,\Bar{\eta})}$. Then the Dirichlet mechanism with parameter $k\in\mathbb{R}_+$ is $(\epsilon,\delta)$-differentially private, where
\begin{multline*}
    \epsilon =\log{\left(\frac{\text{\emph{beta}}(k\eta,k(1-\Bar{\eta}-\eta))}{\text{\emph{beta}}(k(\eta+{\frac{b}{2}}),k(1-\Bar{\eta}-\eta-{\frac{b}{2}}))}\right)}+\\
    {\frac{kb}{2}}\log\left(\frac{1-(|W|-1)\gamma}{\gamma}\right),
\end{multline*}
and
\begin{equation*}
    \delta = 1- \min\limits_{p} \mathbb{P}[\mathcal{M}_D^{(k)}(p)\in\Omega_1].
\end{equation*}
\begin{proof}
    The expression for $\epsilon$ results immediately from Lemma \ref{lemma5} and the expression for $\delta$ is a direct result of \eqref{delta}.
\end{proof}
\end{theorem}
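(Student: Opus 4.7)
The plan is to verify the two conditions of Definition~\ref{def:dp} using precisely the partition $(\Omega_1, \Omega_2)$ of $\Delta_n$ defined in equations~\eqref{omega1}--\eqref{omega2}. The key observation is that this partition is determined by $\gamma$ alone and is independent of the input $p$, so a single partition is available to serve every sensitive input in $\Delta_n^{(\eta,\Bar{\eta})}$ simultaneously, which is exactly what Definition~\ref{def:dp} demands.

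For the $\delta$ inequality, I would note that because $\Omega_2 = \Delta_n \setminus \Omega_1$, for every $p \in \Delta_n^{(\eta,\Bar{\eta})}$ we have $\mathbb{P}[\mathcal{M}_D^{(k)}(p) \in \Omega_2] = 1 - \mathbb{P}[\mathcal{M}_D^{(k)}(p) \in \Omega_1]$. Since the definition requires the inequality to hold for every input, the tightest admissible choice is $\delta = 1 - \min_p \mathbb{P}[\mathcal{M}_D^{(k)}(p) \in \Omega_1]$, matching the theorem statement and the bound recorded in~\eqref{delta}. Lemma~\ref{lemma2} supplies the explicit analytic form of this probability, and Lemma~\ref{lemma3} establishes that this form is log-concave over $\Delta_n^{(\eta,\Bar{\eta})}$. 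Because a log-concave function attains its minimum on a polytope at an extreme point, the continuous minimization reduces to a finite search over the vertices of $\Delta_n^{(\eta,\Bar{\eta})}$, making the stated $\delta$ computable.

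For the $\epsilon$ inequality, I would fix an arbitrary $b$-adjacent pair $p, q \in \Delta_n^{(\eta,\Bar{\eta})}$ and any $x \in \Omega_1$, and invoke Lemma~\ref{lemma5} directly. That lemma was stated for exactly this setup and provides an upper bound on $\log(\mathbb{P}[\mathcal{M}_D^{(k)}(p)=x]/\mathbb{P}[\mathcal{M}_D^{(k)}(q)=x])$ that matches the $\epsilon$ in the theorem statement. Crucially, its right-hand side depends only on the fixed parameters $k, b, \eta, \Bar{\eta}, |W|, \gamma$ and not on the particular choice of $p$, $q$, or $x$, so the same $\epsilon$ serves uniformly over all adjacent pairs and all $x \in \Omega_1$, as Definition~\ref{def:dp} requires.

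Given these supporting lemmas, Theorem~\ref{theorem1} itself is essentially an assembly step, and the main obstacles have already been absorbed upstream. The substantive work is concentrated in Lemma~\ref{lemma3}, which is what permits the continuous minimization defining $\delta$ to collapse to a finite vertex search, and in Lemma~\ref{lemma5}, whose proof isolates the worst-case corner $(p_i, q_i) = (\eta + b/2, \eta)$ via a KKT-plus-boundary analysis aided by Lemma~\ref{c}. Once both are in hand, the proof amounts to observing that the same input-independent partition $(\Omega_1, \Omega_2)$ simultaneously witnesses both conditions of Definition~\ref{def:dp} with the stated values of $\epsilon$ and $\delta$.
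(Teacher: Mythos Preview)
Your proposal is correct and follows exactly the paper's approach: the paper's proof of Theorem~\ref{theorem1} is a two-line assembly that cites Lemma~\ref{lemma5} for the $\epsilon$ bound and equation~\eqref{delta} for the $\delta$ bound, and you have reproduced precisely this structure, merely elaborating on why the input-independent partition $(\Omega_1,\Omega_2)$ serves both conditions of Definition~\ref{def:dp}. Your additional remarks about Lemma~\ref{lemma3} enabling a finite vertex search are also present in the paper's surrounding discussion rather than the proof proper, so there is no substantive divergence.
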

The expression given for $\epsilon$ in Theorem \ref{theorem1} contains a ratio of beta functions. In the following lemma we present upper and lower bounds for beta functions in terms of simpler functions to provide a simplified upper bound for $\epsilon$.

\begin{lemma}\label{approx}
Let $a,b > 1$. Then
\emph{\begin{equation}
    \exp{(2-a-b)}\leq \text{beta}(a,b) \leq \frac{a+b-1}{(2a-1)(2b-1)}.
\end{equation}}
\end{lemma}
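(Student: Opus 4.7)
The plan is to derive the two inequalities independently by applying standard convexity / AM-GM arguments directly to the integral representation $\text{beta}(a,b) = \int_0^1 t^{a-1}(1-t)^{b-1}\, dt$. Both bounds reduce to one-line integrations once the right decomposition of the integrand is chosen.

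For the upper bound, I would observe the identity $t^{a-1}(1-t)^{b-1} = \sqrt{t^{2a-2}(1-t)^{2b-2}}$ and invoke the AM-GM inequality $\sqrt{xy} \le (x+y)/2$ with $x = t^{2a-2}$ and $y = (1-t)^{2b-2}$ to bound the integrand by $\tfrac{1}{2}\bigl(t^{2a-2} + (1-t)^{2b-2}\bigr)$. Integrating term by term over $[0,1]$, which is legitimate since $a, b > 1$ makes both exponents strictly positive, produces $\tfrac{1}{2}\bigl[1/(2a-1) + 1/(2b-1)\bigr]$, and combining the two fractions yields exactly the claimed $(a+b-1)/[(2a-1)(2b-1)]$. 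The crucial design choice is the symmetric splitting with exponents $2a-2$ and $2b-2$; a more naive application of, for example, Young's inequality would only give the weaker bound $1/(a+b-1)$.

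For the lower bound, I would rewrite the integrand as $\exp\bigl((a-1)\log t + (b-1)\log(1-t)\bigr)$ and apply Jensen's inequality to the convex function $\exp$ against the uniform probability measure on $[0,1]$. This moves the exponential outside the integral and yields $\text{beta}(a,b) \ge \exp\!\bigl(\int_0^1 [(a-1)\log t + (b-1)\log(1-t)]\, dt\bigr)$. An elementary integration by parts gives $\int_0^1 \log t\, dt = -1$, and the substitution $t \mapsto 1-t$ gives the same value for the second integrand; collecting these produces the exponent $2-a-b$ as required.

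No step presents a substantive obstacle; the only non-routine work is recognising the appropriate decomposition for each bound. In particular, the upper bound does \emph{not} arise from the most obvious AM-GM splitting, so identifying exponents that match the denominators $(2a-1)$ and $(2b-1)$ in the target expression is the main conceptual point of the argument.
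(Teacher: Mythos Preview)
Your proposal is correct and matches the paper's proof essentially verbatim: the lower bound via Jensen applied to $\exp$ over the uniform measure on $[0,1]$, and the upper bound via $\sqrt{xy}\le (x+y)/2$ (which the paper states equivalently as $2\alpha\beta\le\alpha^2+\beta^2$ with $\alpha=t^{a-1}$, $\beta=(1-t)^{b-1}$) followed by term-by-term integration. There is no substantive difference in method or presentation.
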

\emph{Proof:} 
See~\cite{extended}. \hfill $\blacksquare$

Lemma~\ref{approx} offers a straightforward simplification of Theorem~\ref{theorem1},
though due to space restrictions we evaluate its accuracy numerically. 
\begin{remark}\theoremstyle{remark}
Note that if a mechanism is $\epsilon_1$-differentially private, it is also $\epsilon_2$-differentially private for all $\epsilon_2 \geq \epsilon_1$. Therefore, if the upper bound for $\epsilon$ after simplification of beta functions is still within the acceptable range, \emph{e.g.,} $\delta \leq 0.05$ and $\epsilon \leq 5$ \cite{mcsherry2011differentially,bonomi2012privacy,hsu2014differential}, then using
the approximate value of~$\epsilon$ does not substantially harm interpretation
of the Dirichlet mechanism's protections. 
In Figure \ref{approxfig}, for three instances of $(\eta,\Bar{\eta},k)$ and $b=0.1$, we show how the approximation captures the behavior of $\epsilon$. All three cases show that
the approximation causes an offset to the exact value of~$\epsilon$, and the level of offset decreases with the value of the original $\epsilon$.
\end{remark}

Next, we point out that the parameter~$\gamma$, which is used in the definition of $\Omega_2$, is not a parameter of the mechanism, in the sense that changing~$\gamma$ 
does not change the mechanism itself. Instead, $\gamma$ balances the trade-off between privacy level and the probability of failing to guarantee that privacy level,
\emph{i.e.,} changing~$\gamma$ can decrease~$\epsilon$ in exchange for increasing~$\delta$
and vice versa. 

In some cases, we are given the highest probability of privacy failure,~$\delta$, that is acceptable, and one must maximize the level of privacy subject to that upper bound. Let $\hat{\delta}$ denote maximum admissible value of~$\delta$. Then we are interested in minimizing $\epsilon$ while obeying $\delta \leq \hat{\delta}$. Using Theorem \ref{theorem1} to substitute $\epsilon$, let $V$ be the set of vertices of $\Delta_n^{(\eta,\Bar{\eta})}$. Then we must solve
\begin{equation}\label{op4}
    \begin{aligned}
    &\min\limits_{\gamma} \ \ \gamma \\
    &\begin{aligned}
    &\text{subject to} & & \mathbb{P}[\mathcal{M}_D^{(k)}(p)\in\Omega_1]\geq 1-\hat{\delta} \ \textnormal{for all } p \in V.\\
    \end{aligned}
\end{aligned}
\end{equation}
Note that the constraint set of the optimization problem \eqref{op4} form a convex set as the function $\mathbb{P}[\mathcal{M}_D^{(k)}(p)\in\Omega_1]$ is a strictly decreasing function of $\gamma$. Therefore, $\epsilon$ can be optimized for a given $\hat{\delta}$ using off-the-shelf convex optimization tool-boxes, and this will be done below
in Section~\ref{sim}. Next, we apply the Dirichlet mechanism to average
queries. 

\section{Dirichlet Mechanism for Differential Privacy of Average Queries}\label{body2}
In this section we consider a collection of $N$ vectors
indexed over~$i \in [N]$, with the~$i^{th}$ denoted $p^i \in \Delta_n^\circ$. The goal is to compute the average of the collection $\{p^i\}_{i\in[N]}$ while providing
differential privacy. 
We first re-define the adjacency relationship for 
the average query setting.
\begin{definition}\label{aveadj}
Fix a scalar $b \in (0, 1]$. Two collections $\{p^i\}_{i\in[N]}$ and $\{q^i\}_{i\in[N]}$ are adjacent if there is some $j$ such that
\begin{enumerate}
    \item $p^i = q^i$ for all $j\neq i$,
    \item there exist $m$ and $l$ such that $p^j_{-(m,l)}=q^j_{-(m,l)}$ and $||p^j_{(m,l)}-q^j_{-(m,l)}||\leq b$.
    %\begin{equation*}
     %   ||p^j_{(m,l)}-q^j_{-(m,l)}||\leq b.
    %\end{equation*}
\end{enumerate}
%\hfill $\diamond$
\end{definition}
As mentioned earlier, the query we now consider is the average. Set $\mathcal{P} = \{p^i\}_{i\in[N]}$ and $\mathcal{Q} = \{q^i\}_{i\in[N]}$. Mathematically we write
\begin{equation}\label{average}
    \mathcal{A(P)} := \frac{1}{N} \sum\limits_{i = 1}^{N} p^i,
\end{equation}
with $\mathcal{A(Q)}$ defined analogously. 
The next theorem formalizes the privacy protections of the Dirichlet mechanism
when applied to such averages. 

\begin{theorem}\label{theorem2}
Fix~$\eta, \bar{\eta} \in (0, 1)$ and~$b \in (0, 1]$. 
Let $W \subseteq [n-1]$ be a given set of indices which is used to construct $\Delta_n^{(\eta,\Bar{\eta})}$, let $\mathcal{P} = \{p^i\}_{i \in [N]}$ be a collection of $N$-vectors within $\Delta_n^{(\eta,\Bar{\eta})}$, let $\mathcal{A(P)}$ be the average of the collection, and let~$\mathcal{Q} = \{q^i\}_{i \in [N]}$ be adjacent
to~$\mathcal{P}$. 
Then the Dirichlet mechanism with 
parameter~$k \in \mathbb{R}_{+}$ and input $\mathcal{A(P)}$ is $(\epsilon,\delta)$-differentially private, where
\begin{multline}\label{epsilonave}
    \epsilon = \log{\left(\frac{\text{\emph{beta}}(k\eta,k(1-\Bar{\eta}-\eta))}{\text{\emph{beta}}(k(\eta+{\frac{b}{2n}}),k(1-\Bar{\eta}-\eta-{\frac{b}{2n}}))}\right)}+ \\
          \frac{kb}{2n}\log{\left(\frac{1-(|W|-1)\gamma}{\gamma}\right)},
\end{multline}
and
\begin{equation*}
    \delta = 1- \min\limits_{\mathcal{A(P)}}\mathbb{P}[\mathcal{M}^{(k)}_D(\mathcal{A(P)})\in\Omega_1].
\end{equation*}

\begin{proof}
    For all $i \in [n]$, let $A(p_i) := \mathcal{A(P)}_i$ and $x\in\Omega_1$. Then, we are interested in the quantity
    \begin{align}\label{18}
        \frac{\mathbb{P}[\mathcal{M}_D^{(k)}(\mathcal{A(P))}=x]}{\mathbb{P}[\mathcal{M}_D^{(k)}(\mathcal{A(Q))}=x]} = \frac{\text{B}(k\mathcal{A(Q)}) \prod\limits_{i=1}^{n}x_i^{kA(p_i)-1}}{\text{B}(k\mathcal{A(P)}) \prod\limits_{i=1}^{n}x_i^{kA(q_i)-1}}.
    \end{align}
    Based on the definition of the adjacency relationship for average queries in Definition \ref{aveadj}, $A(p)$ and $A(q)$ will differ only in their $m^{th}$ and $l^{th}$ entries. Taking the logarithm from both sides of \eqref{18} and using the same approach for the identity queries we have that
\begin{align}\label{19}
    &\log{\left(\frac{\mathbb{P}[\mathcal{M}_D^{(k)}(\mathcal{A(P))}=x]}{\mathbb{P}[\mathcal{M}_D^{(k)}(\mathcal{A(Q))}=x]}\right)} \leq \nonumber\\ &\hspace{0.35in}\max\limits_{\mathcal{A(P),A(Q)}}\log{\left(\frac{\text{B}(k\mathcal{A(Q)})}{\text{B}(k\mathcal{A(P)})}\right)} +\\
    &\hspace{0.35in}\max\limits_{\mathcal{A(P),A(Q)}}\log{\left(\frac{1-(|W|-1)\gamma}{\gamma}\right)^{k\left|A(p_m)-A(q_m)\right|}}.\nonumber
\end{align}
    Because $\mathcal{P}$ and $\mathcal{Q}$ are $b$-adjacent, and each entry of $\mathcal{A(\cdot)}$ represents the average of the vectors, we have that
    \begin{equation}\label{b2n}
        \left|A(p_m)-A(q_m)\right| \leq \frac{b}{2n}.
    \end{equation}
    Combining \eqref{19}, \eqref{b2n} and Lemma \ref{lemma5} completes the proof for the value of~$\epsilon$. For $\delta$, same approach for calculating $\delta$ in identity queries applies to average queries.
\end{proof}
\end{theorem}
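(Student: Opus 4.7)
The plan is to reduce the average-query analysis to the identity-query analysis by treating $\mathcal{A}(\mathcal{P})$ as a single simplex input to the Dirichlet mechanism and then invoking Lemma~\ref{lemma5}. First I would write the log-likelihood ratio $\log\!\bigl(\mathbb{P}[\mathcal{M}_D^{(k)}(\mathcal{A}(\mathcal{P}))=x]/\mathbb{P}[\mathcal{M}_D^{(k)}(\mathcal{A}(\mathcal{Q}))=x]\bigr)$ using the Dirichlet PDF. Because $\mathcal{P}$ and $\mathcal{Q}$ differ only in a single index $j$, and the vectors $p^j, q^j$ differ only at two coordinates $m, l$, the averages $\mathcal{A}(\mathcal{P})$ and $\mathcal{A}(\mathcal{Q})$ also agree at every coordinate except $m$ and $l$. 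Every multiplicative factor outside those two coordinates cancels between numerator and denominator, so the ratio collapses to exactly the two-coordinate form that drove the identity-query computation in Lemma~\ref{lemma5}.

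Next I would translate the collection-level adjacency into a scalar bound on the averages. Since only $p^j$ is perturbed and the average divides by the number of vectors, the coordinate-wise deviation $|\mathcal{A}(\mathcal{P})_m - \mathcal{A}(\mathcal{Q})_m|$ is a shrunken version of $|p^j_m - q^j_m|$, which supplies the tighter adjacency bound used in the theorem statement. The simplex constraint still forces $\mathcal{A}(\mathcal{P})_m + \mathcal{A}(\mathcal{P})_l = \mathcal{A}(\mathcal{Q})_m + \mathcal{A}(\mathcal{Q})_l$, and convexity of $\Delta_n^{(\eta,\bar{\eta})}$ (its defining inequalities are affine) ensures that both averages lie in $\Delta_n^{(\eta,\bar{\eta})}$, with their disagreement indices inherited from $p^j, q^j$ and thus contained in $W$. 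Consequently $\mathcal{A}(\mathcal{P})$ and $\mathcal{A}(\mathcal{Q})$ behave exactly like a pair of tighter-adjacency inputs to the identity-query setting.

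With the reduction in hand, I would apply Lemma~\ref{lemma5} with the adjacency parameter $b$ replaced by the scaled bound appearing in the theorem statement, which by direct substitution produces the claimed expression for $\epsilon$ in \eqref{epsilonave}. For $\delta$, the argument from Section~\ref{ss:identity_delta} transfers with no modification: the mechanism sees the single input $\mathcal{A}(\mathcal{P}) \in \Delta_n^{(\eta,\bar{\eta})}$, so the failure probability is bounded by the worst-case $1 - \min_p \mathbb{P}[\mathcal{M}_D^{(k)}(p) \in \Omega_1]$, and log-concavity of $\mathbb{P}[\mathcal{M}_D^{(k)}(p) \in \Omega_1]$ from Lemma~\ref{lemma3} again reduces this to a finite search over the extreme points of $\Delta_n^{(\eta,\bar{\eta})}$.

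The main obstacle I anticipate is bookkeeping rather than new mathematics: I need to verify carefully that the cancellation of all non-$\{m,l\}$ factors in the likelihood ratio goes through unchanged under averaging, and that the shrunken adjacency bound can be slotted into every place the proof of Lemma~\ref{lemma5} previously used $b/2$ (in particular inside the optimization \eqref{op} and in the invocation of Lemma~\ref{c}) without altering any of the constraint structure. Once that substitution is justified, the remaining steps follow mechanically, and no new convexity or special-function arguments are required beyond those already established for identity queries.
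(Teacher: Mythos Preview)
Your proposal is correct and follows essentially the same route as the paper: reduce to the identity-query case by observing that the averages $\mathcal{A}(\mathcal{P})$ and $\mathcal{A}(\mathcal{Q})$ differ in only two coordinates with a shrunken adjacency gap, then invoke Lemma~\ref{lemma5} with $b$ replaced by that scaled bound, and reuse the Section~\ref{ss:identity_delta} argument for $\delta$. Your explicit check that convexity of $\Delta_n^{(\eta,\bar\eta)}$ keeps the averages inside the domain is a detail the paper leaves implicit, but otherwise the two arguments coincide.
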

\begin{remark}\theoremstyle{remark}
As seen in \eqref{epsilonave}, the level of privacy increases with the number of vectors present in the collection. In particular, $\epsilon \to 0$ as $n\to \infty$. This is consistent with the intuition that it would be harder to track each individual of a population when their data is mixed together in an act of averaging.
\end{remark}

\section{Accuracy Analysis}
We briefly analyze the accuracy of the Dirichlet mechanism by two metrics. First, in terms of the expected location of the mechanism output on the unit simplex and second in terms of the variance of the output vector.
\begin{proposition}\label{prop1}
Let $x \in \Delta_n$ be the output of a Dirichlet mechanism with input $p \in \Delta_n^\circ$ and parameter $k\in \mathbb{R}_+$. Then we have that $\mathbb{E}[x_i] = p_i$ and
\begin{equation}
    \text{\emph{Var}}[x_i] = \frac{p_i(1-p_i)}{k+1}\label{variance1}.
\end{equation}
\begin{proof} See \cite{extended}.
\end{proof}
\end{proposition}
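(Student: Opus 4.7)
The plan is to exploit the marginalization (aggregation) property of the Dirichlet distribution: if $x \sim \text{Dirichlet}(\alpha_1,\ldots,\alpha_n)$, then each coordinate $x_i$ is marginally distributed as $\text{Beta}(\alpha_i, \sum_{j\neq i}\alpha_j)$. In our setting $\alpha_j = kp_j$, and since $p \in \Delta_n^\circ$ satisfies $\sum_j p_j = 1$, the marginal collapses to $x_i \sim \text{Beta}(kp_i,\, k(1-p_i))$. Given this, both claims follow by a one-line substitution into the textbook Beta moment formulas $\mathbb{E}[Y] = a/(a+b)$ and $\text{Var}[Y] = ab/[(a+b)^2(a+b+1)]$ with $a = kp_i$ and $b = k(1-p_i)$, producing $\mathbb{E}[x_i] = p_i$ and $\text{Var}[x_i] = p_i(1-p_i)/(k+1)$.

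The only step that requires genuine argument is the marginalization itself. The cleanest route is the Gamma-quotient representation of the Dirichlet: take independent $Y_j \sim \text{Gamma}(kp_j, 1)$, set $S = \sum_j Y_j$, and note that $(Y_1/S,\ldots,Y_n/S) \sim \text{Dirichlet}(kp_1,\ldots,kp_n)$. Since independent Gamma variables with a common rate sum to a Gamma, $\sum_{j\neq i} Y_j \sim \text{Gamma}(k(1-p_i),1)$, and the classical Gamma-to-Beta relation then identifies $Y_i/(Y_i + \sum_{j\neq i} Y_j)$ as $\text{Beta}(kp_i, k(1-p_i))$. Alternatively, one can integrate $x_{-i}$ out of the Dirichlet PDF by iterating the substitution $u = x_{n-1}/(1-y)$ already exercised in the proof of Lemma~\ref{lemma2}, which reduces an $(n-1)$-fold integral to the two-dimensional Beta density.

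I do not anticipate any real obstacle; the result is essentially textbook once the marginal is identified. The only subtlety worth flagging is that the simplex constraint $\sum_j p_j = 1$ is precisely what collapses the Dirichlet concentration parameter $\sum_j kp_j$ to the clean value $k$, which in turn yields the compact denominator $k+1$ in the variance expression. This is also what drives the privacy/accuracy trade-off referenced in the surrounding text: larger $k$ shrinks $\text{Var}[x_i]$ pointwise, but, as already seen in Theorem~\ref{theorem1}, inflates the privacy parameter $\epsilon$ through the ratio of beta functions.
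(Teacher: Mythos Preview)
Your proposal is correct and essentially matches the paper's proof: the paper simply cites the Dirichlet moment formulas $\mathbb{E}[x_i] = kp_i/\bar{p}$ and $\text{Var}[x_i] = kp_i(\bar{p}-kp_i)/[\bar{p}^2(\bar{p}+1)]$ from Kotz et al.\ and then substitutes $\bar{p} = \sum_r kp_r = k$, while you first pass to the Beta marginal and use the Beta moments---which is exactly how those Dirichlet formulas are obtained in the first place. Both arguments hinge on the same observation that $\sum_j kp_j = k$ collapses the denominators to $k$ and $k+1$.
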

\begin{remark} \theoremstyle{remark} As seen in \eqref{variance1} the variance of the output depends on the input data $p_i$. However, we can find the worst-case variance by maximizing the expression for the variance which occurs at $p_i = 0.5$. Hence, we have that
\begin{equation}\label{variance}
    \text{\emph{Var}}[x_i] \leq \frac{1}{4(k+1)}.
\end{equation}
\end{remark}
\section{Simulation Results}\label{sim}
\begin{figure}
     \centering
     \begin{subfigure}[b]{1\columnwidth}
         \centering
         \begin{subfigure}[b]{0.48\columnwidth}
            \centering
            \includegraphics[width=\columnwidth]{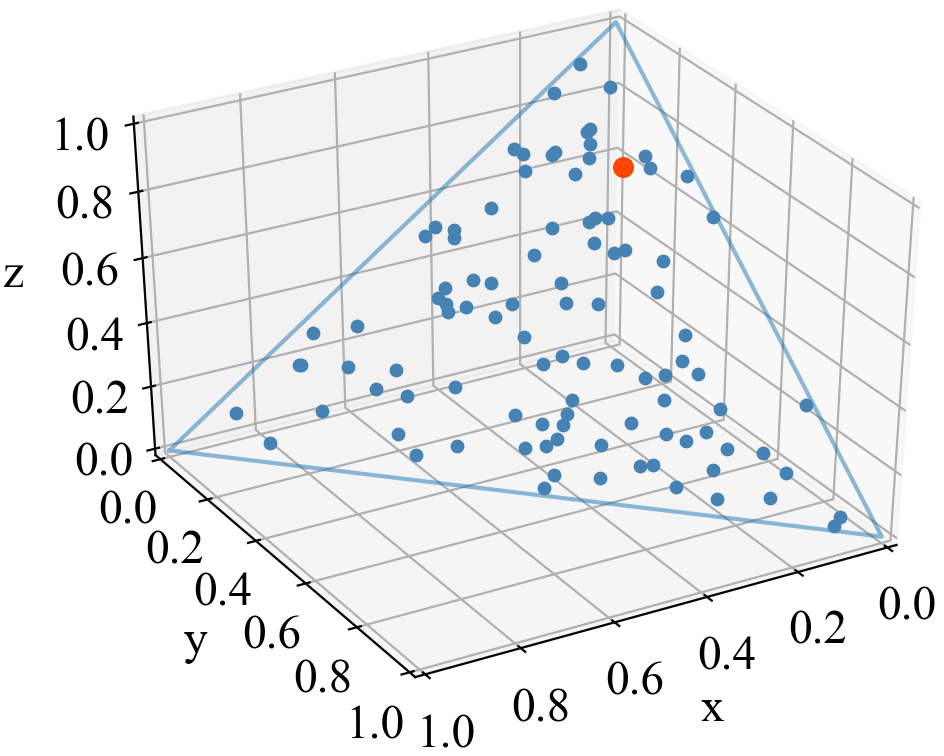}
            \label{Pcollection}
        \end{subfigure}
        \begin{subfigure}[b]{0.48\columnwidth}
            \centering
            \includegraphics[width=\columnwidth]{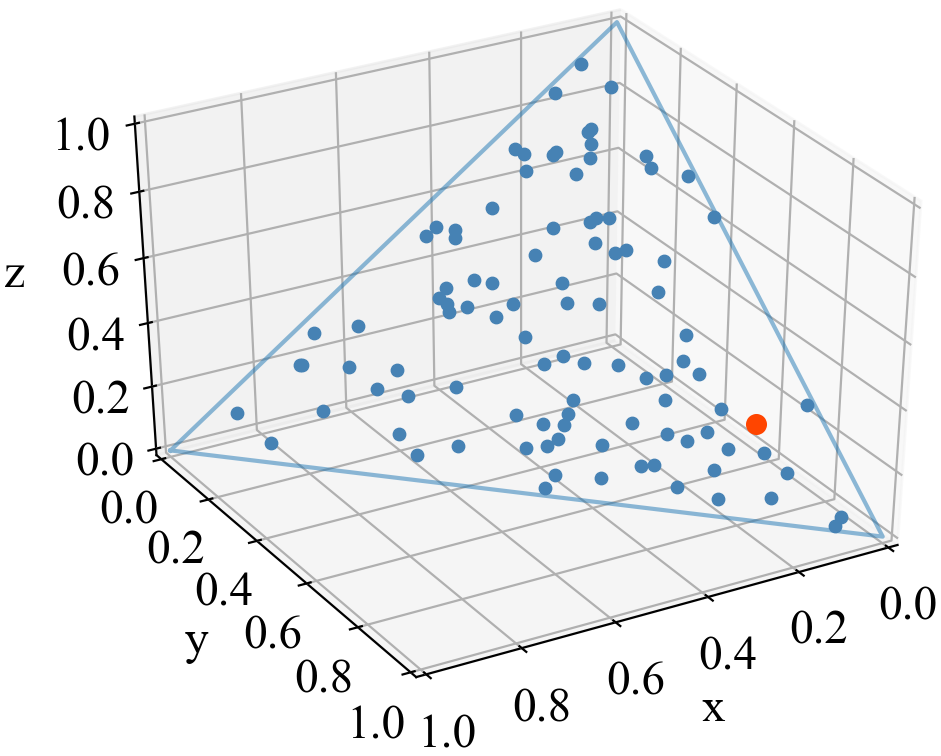}
            \label{Qcollection}
        \end{subfigure}
        \label{input}
        \caption{Visualization of two $1$-adjacent vector collections $\mathcal{P}$ and $\mathcal{Q}$. The left figure depicts $\mathcal{P}$ and the right figure corresponds to $\mathcal{Q}$. The data points with orange markers correspond to the vectors in which $\mathcal{P}$ and $\mathcal{Q}$ differ.}
     \end{subfigure}
     \hfill
     \begin{subfigure}[b]{1\columnwidth}
         \centering
         \begin{subfigure}[b]{0.48\columnwidth}
            \centering
            \includegraphics[width=\columnwidth]{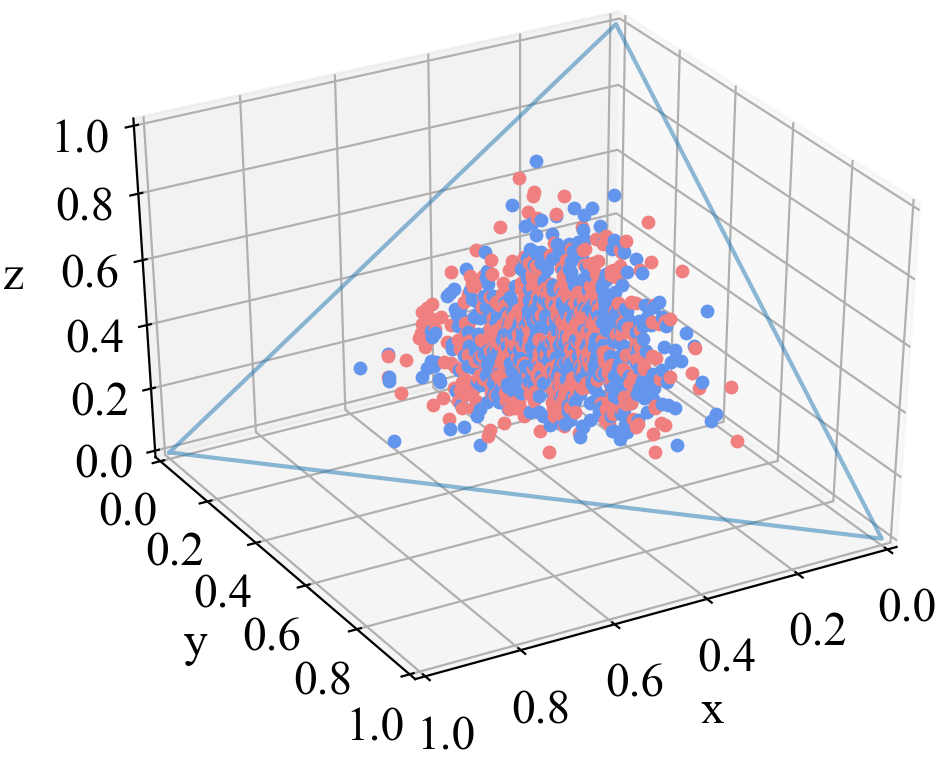}
            \label{k=10}
        \end{subfigure}
        \begin{subfigure}[b]{0.48\columnwidth}
            \centering
            \includegraphics[width=\columnwidth]{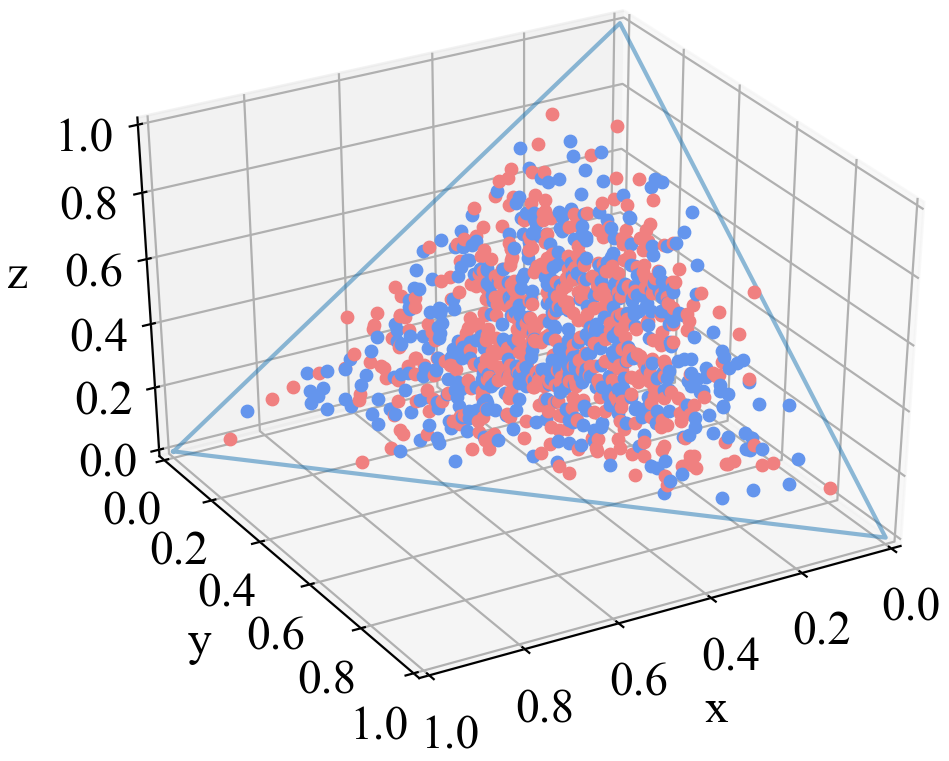}
            \label{k=24}
        \end{subfigure}
        
        \caption{The output of the Dirichlet mechanism when the input is $\mathcal{P}$ vs. $\mathcal{Q}$. The left plot shows the case where $k=24$ and the right plot corresponds to $k=10$. Each red data point corresponds to an independent run of $\mathcal{M}_D^{(k)}(\mathcal{A(P)})$ and the blue data points correspond to $\mathcal{M}_D^{(k)}(\mathcal{A(Q)})$.}\label{output}
     \end{subfigure}
        \caption{An average query on a collection of 100 vectors within $\Delta_3$.}
        \label{fig:three graphs}
\end{figure}

In this section, we simulate the output of the Dirichlet mechanism for an average query. As an example of average queries, suppose we ask a number of experts for their opinion on the probability of certain events happening, thus, a vector in the unit simplex. In order to make a decision based on all opinions, we need to integrate the opinions into one \cite{clemen1999combining}. One possible way of integrating the opinions is to take the average of the opinions. Privatizing the average of the experts' opinion while keeping their individual forecasts private is an example of average queries. \par
In Figure \ref{fig:three graphs}, we show an example of privatizing the average of 100 experts' opinions. In this example we have a collection of opinions $\mathcal{P}$, and we want to compare the output of the Dirichlet mechanism with the output of the mechanism when fed with a collection $\mathcal{Q}$ that is  1-adjacent to $\mathcal{P}$. \par
We chose $k=24$ to keep the variance of the output below $0.01$ according to \eqref{variance}. We have fixed $W$ to be the set $\{1,2\}$, $\eta = 0.05$ and $\Bar{\eta}=0.05$. Using Theorem \ref{theorem2}, for $\hat{\delta}=0.05$, we find that the mechanism is $(1.18,0.05)$-differentially private. Table \ref{statistics} shows the empirical accuracy analysis of mechanism. In Figure \ref{fig:three graphs}, we can observe that, given the location of the mechanism output, it is not possible to determine with high probability whether the input is $\mathcal{P}$ or $\mathcal{Q}$. Table \ref{statistics}, shows that we were able to achieve the desired variance.

\begin{table}[]
\centering
\begin{tabular}{@{}lc@{}}
\toprule
Statistics                                          & Values \\ \midrule
$\mathcal{A(P)}$                                    & $(0.314923,0.315923,0.320923)$\\
$\mathcal{A(Q)}$                                    & $(0.314923,0.320923,0.315923)$\\
$\hat{\mathcal{M}_D}(\mathcal{A(P)})$               & $(0.327731,0.336119,0.336149)$\\
$\hat{\mathcal{M}_D}(\mathcal{A(Q)})$               & $(0.326620,0.338976,0.334402)$\\
$\hat{\mathop{Var}}[\mathcal{M}_D(\mathcal{A(P)}]$  & $(0.00934632,0.00983122,0.0102117)$       \\
$\hat{\mathop{Var}}[\mathcal{M}_D(\mathcal{A(Q)}]$  & $(0.00922426,0.0100889,0.0100757)$       \\ \bottomrule
\end{tabular}
\caption{Comparing the average of the output of the mechanism with input collections $\mathcal{P}$ and $\mathcal{Q}$, alongside with their empirical variances. The values correspond to $k=24$.}\label{statistics}
\end{table}

In order to illustrate the effect of changing $k$ in the mechanism accuracy, Figure \ref{output} compares the output of the mechanism when $k=24$ and $k=10$. As seen in the figure, the output when $k=10$ is less concentrated around the average. It can also be seen that the probability that the output belongs to $\Omega_2$ is higher when $k=10$, which is consistent with the expressions derived in the theorems.

\section{Conclusion} \label{con}
In this work we introduced a mechanism used for privatizing data inputs that belong to the unit simplex. We used the Dirichlet distribution to probabilistically map a vector within the unit simplex to itself. We proved that the Dirichlet mechanism is differentially private with high probability in both identity and average queries. Our simulation results validated that the privacy bounds and the accuracy of the mechanism are within 
ranges typically considered in the differential privacy literature. \par
As an extension to this work, we are interested in applying the Dirichlet mechanism to privatizing a policy in a Markov decision process. In particular, we are interested in showing how accurate the Dirichlet mechanism is in terms of the total-accumulated rewards. \par
\bibliographystyle{IEEEtran}
%\bibliography{ref}

\begin{thebibliography}{10}
\providecommand{\url}[1]{#1}
\csname url@samestyle\endcsname
\providecommand{\newblock}{\relax}
\providecommand{\bibinfo}[2]{#2}
\providecommand{\BIBentrySTDinterwordspacing}{\spaceskip=0pt\relax}
\providecommand{\BIBentryALTinterwordstretchfactor}{4}
\providecommand{\BIBentryALTinterwordspacing}{\spaceskip=\fontdimen2\font plus
\BIBentryALTinterwordstretchfactor\fontdimen3\font minus
  \fontdimen4\font\relax}
\providecommand{\BIBforeignlanguage}[2]{{%
\expandafter\ifx\csname l@#1\endcsname\relax
\typeout{** WARNING: IEEEtran.bst: No hyphenation pattern has been}%
\typeout{** loaded for the language `#1'. Using the pattern for}%
\typeout{** the default language instead.}%
\else
\language=\csname l@#1\endcsname
\fi
#2}}
\providecommand{\BIBdecl}{\relax}
\BIBdecl

\bibitem{hart1992nonintrusive}
G.~W. Hart, ``Nonintrusive appliance load monitoring,'' \emph{Proceedings of
  the IEEE}, vol.~80, no.~12, pp. 1870--1891, 1992.

\bibitem{7833044}
Y.~{Wang}, Z.~{Huang}, S.~{Mitra}, and G.~E. {Dullerud}, ``Differential privacy
  in linear distributed control systems: Entropy minimizing mechanisms and
  performance tradeoffs,'' \emph{IEEE Transactions on Control of Network
  Systems}, vol.~4, no.~1, pp. 118--130, March 2017.

\bibitem{7577745}
E.~{Nozari}, P.~{Tallapragada}, and J.~{Cortés}, ``Differentially private
  distributed convex optimization via functional perturbation,'' \emph{IEEE
  Transactions on Control of Network Systems}, vol.~5, no.~1, pp. 395--408,
  March 2018.

\bibitem{8431397}
M.~{Hale}, A.~{Jones}, and K.~{Leahy}, ``Privacy in feedback: The
  differentially private lqg,'' in \emph{2018 Annual American Control
  Conference (ACC)}, June 2018, pp. 3386--3391.

\bibitem{8814723}
A.~{Jones}, K.~{Leahy}, and M.~{Hale}, ``Towards differential privacy for
  symbolic systems,'' in \emph{2019 American Control Conference (ACC)}, July
  2019, pp. 372--377.

\bibitem{puterman2014markov}
M.~L. Puterman, \emph{Markov Decision Processes.: Discrete Stochastic Dynamic
  Programming}.\hskip 1em plus 0.5em minus 0.4em\relax John Wiley \& Sons,
  2014.

\bibitem{sutton1998introduction}
R.~S. Sutton, A.~G. Barto \emph{et~al.}, \emph{Introduction to reinforcement
  learning}.\hskip 1em plus 0.5em minus 0.4em\relax MIT press Cambridge, 1998,
  vol.~2, no.~4, ch.~3.

\bibitem{8735817}
Y.~{Savas}, M.~{Ornik}, M.~{Cubuktepe}, M.~O. {Karabag}, and U.~{Topcu},
  ``Entropy maximization for markov decision processes under temporal logic
  constraints,'' \emph{IEEE Transactions on Automatic Control}, 2019.

\bibitem{wu2019switched}
B.~Wu, M.~Cubuktepe, and U.~Topcu, ``Switched linear systems meet markov
  decision processes: Stability guaranteed policy synthesis,'' in \emph{2019
  IEEE 58th Annual Conference on Decision and Control (CDC)}.\hskip 1em plus
  0.5em minus 0.4em\relax IEEE, 2019, to appear, preprint arXiv:1904.11456.

\bibitem{chatterjee2006markov}
K.~Chatterjee, R.~Majumdar, and T.~A. Henzinger, ``Markov decision processes
  with multiple objectives,'' in \emph{Annual Symposium on Theoretical Aspects
  of Computer Science}.\hskip 1em plus 0.5em minus 0.4em\relax Springer, 2006,
  pp. 325--336.

\bibitem{brechtel11}
S.~{Brechtel}, T.~{Gindele}, and R.~{Dillmann}, ``Probabilistic mdp-behavior
  planning for cars,'' in \emph{2011 14th International IEEE Conference on
  Intelligent Transportation Systems (ITSC)}, Oct 2011, pp. 1537--1542.

\bibitem{misra13}
S.~Misra, A.~Mondal, S.~Banik, M.~Khatua, S.~Bera, and M.~S. Obaidat,
  ``Residential energy management in smart grid: A markov decision
  process-based approach,'' in \emph{2013 IEEE International Conference on
  Green Computing and Communications and IEEE Internet of things and IEEE
  Cyber, Physical and Social Computing}.\hskip 1em plus 0.5em minus 0.4em\relax
  IEEE, 2013, pp. 1152--1157.

\bibitem{dwork2006calibrating}
C.~Dwork, F.~McSherry, K.~Nissim, and A.~Smith, ``Calibrating noise to
  sensitivity in private data analysis,'' in \emph{Theory of cryptography
  conference}.\hskip 1em plus 0.5em minus 0.4em\relax Springer, 2006, pp.
  265--284.

\bibitem{dwork2014algorithmic}
C.~Dwork, A.~Roth \emph{et~al.}, ``The algorithmic foundations of differential
  privacy,'' \emph{Foundations and Trends{\textregistered} in Theoretical
  Computer Science}, vol.~9, no. 3--4, pp. 211--407, 2014.

\bibitem{kasiviswanathan2014semantics}
S.~P. Kasiviswanathan and A.~Smith, ``On the'semantics' of differential
  privacy: A bayesian formulation,'' \emph{Journal of Privacy and
  Confidentiality}, vol.~6, no.~1, 2014.

\bibitem{nissim2018privacy}
K.~Nissim and A.~Wood, ``Is privacy privacy?'' \emph{Philosophical Transactions
  of the Royal Society A: Mathematical, Physical and Engineering Sciences},
  vol. 376, no. 2128, p. 20170358, 2018.

\bibitem{cortes2016differential}
J.~Cort{\'e}s, G.~E. Dullerud, S.~Han, J.~Le~Ny, S.~Mitra, and G.~J. Pappas,
  ``Differential privacy in control and network systems,'' in \emph{2016 IEEE
  55th Conference on Decision and Control (CDC)}.\hskip 1em plus 0.5em minus
  0.4em\relax IEEE, 2016, pp. 4252--4272.

\bibitem{han2018privacy}
S.~Han and G.~J. Pappas, ``Privacy in control and dynamical systems,''
  \emph{Annual Review of Control, Robotics, and Autonomous Systems}, vol.~1,
  pp. 309--332, 2018.

\bibitem{gotz11}
M.~Gotz, A.~Machanavajjhala, G.~Wang, X.~Xiao, and J.~Gehrke, ``Publishing
  search logs—a comparative study of privacy guarantees,'' \emph{IEEE
  Transactions on Knowledge and Data Engineering}, vol.~24, no.~3, pp.
  520--532, 2011.

\bibitem{karp2016normalized}
D.~B. Karp, ``Normalized incomplete beta function: log-concavity in parameters
  and other properties,'' \emph{Journal of Mathematical Sciences}, vol. 217,
  no.~1, pp. 91--107, 2016.

\bibitem{prekopa1972}
A.~Pr{\'e}kopa, ``On logarithmic concave measures and functions,'' \emph{Acta
  Scientiarum Mathematicarum}, vol.~34, pp. 335--343, 1973.

\bibitem{Prekopa71logarithmicconcave}
A.~Pr{\'{e}}kopa, ``Logarithmic concave measures with application to stochastic
  programming,'' \emph{Acta Scientiarum Mathematicarum}, vol.~32, pp. 301--316,
  1971.

\bibitem{4389483}
F.~{McSherry} and K.~{Talwar}, ``Mechanism design via differential privacy,''
  in \emph{48th Annual IEEE Symposium on Foundations of Computer Science
  (FOCS'07)}, Oct 2007, pp. 94--103.

\bibitem{vadhan2017complexity}
S.~Vadhan, ``The complexity of differential privacy,'' in \emph{Tutorials on
  the Foundations of Cryptography}.\hskip 1em plus 0.5em minus 0.4em\relax
  Springer, 2017, pp. 347--450.

\bibitem{extended}
\BIBentryALTinterwordspacing
P.~Gohari, B.~Wu, M.~Hale, and U.~Topcu, ``The dirichlet mechanism for
  differential privacy on the unit simplex,'' extended version. [Online].
  Available: \url{https://users.oden.utexas.edu/~bwu/ACC2020WithProofs.pdf}
\BIBentrySTDinterwordspacing

\bibitem{machanavajjhala2008privacy}
A.~Machanavajjhala, D.~Kifer, J.~Abowd, J.~Gehrke, and L.~Vilhuber, ``Privacy:
  Theory meets practice on the map,'' in \emph{Proceedings of the 2008 IEEE
  24th International Conference on Data Engineering}.\hskip 1em plus 0.5em
  minus 0.4em\relax IEEE Computer Society, 2008, pp. 277--286.

\bibitem{rao1980incomplete}
J.~Rao and M.~Sobel, ``Incomplete dirichlet integrals with applications to
  ordered uniform spacings,'' \emph{Journal of Multivariate Analysis}, vol.~10,
  no.~4, pp. 603--610, 1980.

\bibitem{Boyd:2004:CO:993483}
S.~Boyd and L.~Vandenberghe, \emph{Convex Optimization}.\hskip 1em plus 0.5em
  minus 0.4em\relax New York, NY, USA: Cambridge University Press, 2004.

\bibitem{mcsherry2011differentially}
F.~McSherry and R.~Mahajan, ``Differentially-private network trace analysis,''
  \emph{ACM SIGCOMM Computer Communication Review}, vol.~41, no.~4, pp.
  123--134, 2011.

\bibitem{bonomi2012privacy}
L.~Bonomi, L.~Xiong, R.~Chen, and B.~Fung, ``Privacy preserving record linkage
  via grams projections,'' \emph{arXiv preprint arXiv:1208.2773}, 2012.

\bibitem{hsu2014differential}
J.~Hsu, M.~Gaboardi, A.~Haeberlen, S.~Khanna, A.~Narayan, B.~C. Pierce, and
  A.~Roth, ``Differential privacy: An economic method for choosing epsilon,''
  in \emph{2014 IEEE 27th Computer Security Foundations Symposium}.\hskip 1em
  plus 0.5em minus 0.4em\relax IEEE, 2014, pp. 398--410.

\bibitem{clemen1999combining}
R.~T. Clemen and R.~L. Winkler, ``Combining probability distributions from
  experts in risk analysis,'' \emph{Risk analysis}, vol.~19, no.~2, pp.
  187--203, 1999.

\bibitem{kotz2004continuous}
S.~Kotz, N.~Balakrishnan, and N.~L. Johnson, \emph{Continuous multivariate
  distributions, Volume 1: Models and applications}.\hskip 1em plus 0.5em minus
  0.4em\relax John Wiley \& Sons, 2004, vol.~1.

\end{thebibliography}
% Generated by IEEEtran.bst, version: 1.14 (2015/08/26)

\appendix
We first state a theorem from \cite{prekopa1972} which we later use to prove Lemma \ref{lemma3}.
\begin{theorem}\label{theorem3}
Let $f_1,\dots,f_k$ be non-negative and Borel measurable functions defined on $\mathbb{R}^n$ and let
\begin{equation*}
    \begin{aligned}
    & r(t) = \underset{\lambda_1x_1+\dots+\lambda_k=t}{\text{sup}}
    & & f_1(x_1)...f_k(x_k), & t\in \mathbb{R}^n,
    \end{aligned}
\end{equation*}
where $\lambda_1,\dots,\lambda_k$ are positive constants satisfying the equality $\lambda_1+\dots+\lambda_k=1$. Then, the function $r(t)$ also Borel measurable and we have the following inequality
\begin{equation*}
    \int_{\mathbb{R}^n}r(t)dt \geq \left(\int_{\mathbb{R}^n}f_1^\frac{1}{\lambda_1}dt\right)^{\lambda_1}\dots\left(\int_{\mathbb{R}^n}f_k^\frac{1}{\lambda_k}dt\right)^{\lambda_k}.
\end{equation*}
\end{theorem}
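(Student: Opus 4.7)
The statement to prove is the classical multiplicative Prékopa-Leindler inequality. My plan is a nested induction: first reduce the general-$k$ case to $k=2$ by induction on $k$, then establish the $k=2$ case by induction on the dimension $n$, with the base case $n=1$ handled via the one-dimensional Brunn-Minkowski inequality and a layer-cake argument.

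For the reduction from $k$ to $k=2$, set $\mu = \lambda_2 + \dots + \lambda_k = 1-\lambda_1$ and note that the weights $\lambda_i/\mu$ for $i\geq 2$ are positive and sum to one. Define
\begin{equation*}
g(y) = \sup_{\sum_{i\geq 2}(\lambda_i/\mu)x_i = y}\ \prod_{i\geq 2} f_i(x_i)^{1/\mu},
\end{equation*}
so that $g(y)^\mu$ equals the inner supremum in the decomposition $\lambda_1 x_1 + \mu\bigl(\sum_{i\geq 2}(\lambda_i/\mu)x_i\bigr) = t$. Then $r(t) = \sup_{\lambda_1 x_1 + \mu y = t} f_1(x_1)\,g(y)^\mu$, and the $k=2$ case applied to $f_1$ and $\tilde f_2 := g^\mu$ with weights $\lambda_1,\mu$ yields $\int r \geq (\int f_1^{1/\lambda_1})^{\lambda_1}(\int g)^{\mu}$. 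The inductive hypothesis for $k-1$ applied to $\{f_i^{1/\mu}\}_{i\geq 2}$ with weights $\{\lambda_i/\mu\}_{i\geq 2}$ gives $\int g \geq \prod_{i\geq 2}(\int f_i^{1/\lambda_i})^{\lambda_i/\mu}$; raising to the $\mu$-th power and multiplying completes the step.

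For the $k=2$ case, after substituting $f_i \leftarrow f_i^{1/\lambda_i}$ the statement reduces to the familiar form: if $r(\lambda x + (1-\lambda)y) \geq f(x)^\lambda g(y)^{1-\lambda}$ for all $x,y$, then $\int r \geq (\int f)^\lambda (\int g)^{1-\lambda}$. I induct on $n$. In the inductive step, I slice: for fixed scalars $x_n,y_n,t_n$ with $t_n = \lambda x_n + (1-\lambda) y_n$, apply the $(n-1)$-dimensional version to the sections $f(\cdot,x_n)$, $g(\cdot,y_n)$, $r(\cdot,t_n)$ to obtain $F(t_n) \geq F_1(x_n)^\lambda F_2(y_n)^{1-\lambda}$, where $F, F_1, F_2$ denote the sectional integrals. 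A second application, this time of the one-dimensional case, to $F,F_1,F_2$ combined with Fubini yields the $n$-dimensional bound.

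The principal obstacle is the base case $n=1$. Here I will normalize by $\sup f = \sup g = 1$ (possible by rescaling both sides homogeneously in $f,g$), so that the superlevel sets $A_s = \{f > s\}$ and $B_s = \{g > s\}$ are nonempty for every $s \in (0,1)$. The definition of $r$ forces $\lambda A_s + (1-\lambda) B_s \subseteq \{r > s\}$. The one-dimensional Brunn-Minkowski inequality gives $\mathcal{L}(\lambda A_s + (1-\lambda) B_s) \geq \lambda \mathcal{L}(A_s) + (1-\lambda) \mathcal{L}(B_s)$, and the weighted AM-GM inequality converts this arithmetic lower bound into the geometric mean $\mathcal{L}(A_s)^\lambda \mathcal{L}(B_s)^{1-\lambda}$. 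Integrating in $s$ via the layer-cake formula $\int u = \int_0^\infty \mathcal{L}(\{u>s\})\,ds$ and applying the integral form of Hölder's inequality to the two factors gives $\int r \geq (\int f)^\lambda (\int g)^{1-\lambda}$, as desired. The Borel measurability of $r$ follows from a standard approximation: approximate $f_i$ from below by simple functions, for which $r$ is piecewise constant on a countable partition, then take suprema. The main bookkeeping challenge across the whole argument is tracking exponents correctly through the reductions, particularly the transition from the multiplicative form with weights $1/\lambda_i$ to the exponent form $f^\lambda g^{1-\lambda}$; once this is set up cleanly, each induction step is essentially mechanical.
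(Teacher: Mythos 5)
The paper does not actually prove this statement: it is quoted verbatim from Pr\'ekopa's work (reference \cite{prekopa1972}) and used as a black box in the proof of Lemma \ref{lemma3}, so there is no in-paper argument to compare against. Your proposal is the standard modern proof of the multiplicative Pr\'ekopa--Leindler inequality, and its architecture is sound: the reduction from general $k$ to $k=2$ via the intermediate sup-convolution $g$ is set up correctly and the exponents ($f_i^{1/\mu}$ with weights $\lambda_i/\mu$, then raising $\int g$ to the power $\mu$) do check out, as does the slicing/Fubini induction on dimension.

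There is, however, a genuine error in the one-dimensional base case as you describe it. After Brunn--Minkowski gives $\mathcal{L}(\{r>s\}) \geq \lambda\,\mathcal{L}(A_s) + (1-\lambda)\,\mathcal{L}(B_s)$, you propose to apply AM--GM pointwise in $s$ to get $\mathcal{L}(A_s)^{\lambda}\mathcal{L}(B_s)^{1-\lambda}$ and then use H\"older after integrating. But H\"older's inequality gives
\begin{equation*}
\int_0^1 \mathcal{L}(A_s)^{\lambda}\,\mathcal{L}(B_s)^{1-\lambda}\,ds \;\leq\; \left(\int_0^1 \mathcal{L}(A_s)\,ds\right)^{\lambda}\left(\int_0^1 \mathcal{L}(B_s)\,ds\right)^{1-\lambda},
\end{equation*}
which points the wrong way and cannot be chained with your lower bound. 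The correct order is to skip the pointwise AM--GM entirely: integrate the arithmetic bound in $s$ to get $\int r \geq \lambda\int f + (1-\lambda)\int g$ (using the normalization $\sup f=\sup g=1$ so the layer-cake integrals run over $s\in(0,1)$), and only then apply weighted AM--GM once, at the level of the two integrals. With that repair the base case closes. Two smaller caveats you should also address: the Minkowski sum $\lambda A_s + (1-\lambda)B_s$ of Borel sets need not be Borel, so you must either use inner measure in Brunn--Minkowski or observe that the sum is contained in the measurable set $\{r>s\}$, whose measure is what you actually need; and your closing claim that $r$ is Borel measurable by approximation with simple functions is not convincing for the same reason --- sup-convolutions of Borel functions are in general only analytic, hence Lebesgue (universally) measurable, which is all the integral inequality requires.
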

\subsection{Proof of Lemma \ref{lemma3}}
We first review the definition of log-concave functions. A function $g: \mathbb{R}^n \mapsto \mathbb{R}$ is said to be log-concave if for all $x_1,x_2 \in \mathbb{R}^n$ and $ \theta \in [0,1]$, we have that
\begin{equation*}
    g( \theta x_1 + (1- \theta)x_2) \geq (g(x_1))^ \theta (g(x_2))^{1- \theta}.
\end{equation*}
The above condition is equivalent to
\begin{equation*}
    g(t) \geq \mathop{\text{sup}}\limits_{\theta u + (1-\theta)v = t} g(u)^\theta g(v)^{1-\theta}.
\end{equation*}
Note that function $g$ is log-concave if and only if $\log g$ is concave. Next, for $x\in \mathbb{R}^{|W|}$ and $p\in \Delta_{|W|}^{(\eta,\Bar{\eta})}$ let $f: \mathbb{R}^{|W|} \times \Delta_{|W|}^{(\eta,\Bar{\eta})} \mapsto [0,1]$ be
\begin{equation*}
    f(x,p) = \frac{\prod\limits_{i\in W}x_i^{kp_i-1}\left(1-\sum\limits_{i\in W}x_i\right)^{k(1-\sum\limits_{i\in W}p_i)-1}}{\text{\emph{B}}(kp_ W)}.
\end{equation*}
For a fixed $p\in\Delta_{|W|}^{\eta,\Bar{\eta}}$, let
\begin{equation*}
    f_1(x) := f(x,p).
\end{equation*}
Function $f_1(x)$ is the Dirichlet probability distribution function with parameter $\alpha \in \mathbb{R}^{W}$ where $\alpha:= kp_W$. Since $p \in \Delta_{|W|}^{\eta,\Bar{\eta}}$, we have that $\alpha_i\geq1$, for all $i\in\left[|W| \right]$. Therefore $f_1$ is a log-concave function \cite{Prekopa71logarithmicconcave}. Therefore,
\begin{equation}\label{ap1}
    f(t_x,p) \geq \mathop{\text{sup}}\limits_{\alpha u_x + (1-\alpha)v_x = t_x} f(u_x,p)^\alpha f(v_x,p)^{1-\alpha},
\end{equation}
for all $p \in \Delta_{|W|}^{(\eta,\Bar{\eta})}$, $t_x,u_x,v_x \in \mathbb{R}^{|W|}$ and $\alpha \in [0,1]$. Similarly, for a fixed $x\in\mathbb{R}^{|W|}$, let
\begin{equation*}
    f_2(p):=f(x,p).
\end{equation*}
Evaluating the Hessian of $\log f_2(p)$, let
\begin{equation*}
    \Bar{\psi} := \psi^{(0)}\left(k\left(1-\sum\limits_{i\in\left[|W|\right]}x_i\right)\right),
\end{equation*}
where $\psi^{(0)}$ is the digamma function. Then,
\begin{equation*}
    -\frac{\left(\nabla^2 \log f_2(p)\right)_{i,j}}{k^2} = \left\{ \begin{aligned}
        &\psi^{(0)}(kp_i)+\Bar{\psi}& &i=j\\
    &\Bar{\psi}& &i \neq j
    \end{aligned}\right.,
\end{equation*}
The digamma function is strictly increasing on interval $(0,+\infty)$. Therefore, the Hessian matrix is a sum of two negative semi definite matrices and as a result, negative semi definite itself. The aforementioned argument results in log-concavity of $f_2(p)$. Therefore,
\begin{equation}\label{ap2}
    f(x,t_p) \geq \mathop{\text{sup}}\limits_{\beta u_p + (1-\beta)v_p = t_p} f(x,u_p)^\beta f(x,v_p)^{1-\beta},
\end{equation}
for all $x \in \mathbb{R}^{|W|}$, $t_p,u_p,v_p \in \Delta_{|W|}^{(\eta,\Bar{\eta})}$ and $\beta \in [0,1]$. Let $\lambda \in [0,1]$, choose $\Tilde{u_x},\Tilde{v_x},\Tilde{u_p},\Tilde{v_p}$ such that
\begin{align*}
    &\lambda \Tilde{u_x} + (1-\lambda) \Tilde{v_x} = t_x,\\
    &\lambda \Tilde{u_p} + (1-\lambda) \Tilde{v_p} = p.
\end{align*}
Assigning $u_x$ to $x$ in \eqref{ap2}, we find 
%\begin{multline*}
%    \mathop{\text{sup}}\limits_{\alpha u_x + (1-\alpha)v_x = t_x} f(u_x,p)^\alpha f(v_x,p)^{1-\alpha} \geq \\ f(\Tilde{u_x},p)^\lambda f(\Tilde{v_x},p)^{1-\lambda},
%\end{multline*}
%and
\begin{align}\label{ap3}
    f(u_x,p)&\geq\mathop{\text{sup}}\limits_{\beta u_p + (1-\beta)v_p = p} f(u_x,u_p)^\beta f(u_x,v_p)^{1-\beta}\nonumber\\
    &\geq f(u_x,\Tilde{u_p})^\lambda f(u_x,\Tilde{v_p})^{1-\lambda}.
\end{align}
Similarly, we can write
\begin{align}\label{ap4}
    f(v_x,p)&\geq\mathop{\text{sup}}\limits_{\beta u_p + (1-\beta)v_p = p} f(v_x,u_p)^\beta f(v_x,v_p)^{1-\beta}\nonumber\\
    &\geq f(v_x,\Tilde{u_p})^\lambda f(v_x,\Tilde{v_p})^{1-\lambda}.
\end{align}
Revisiting \eqref{ap1}, using \eqref{ap3} and \eqref{ap4}, we can write
\begin{align}\label{ap5}
     &f(t_x,p) & &\geq & &\mathop{\text{sup}}\limits_{\alpha u_x + (1-\alpha)v_x = t_x} f(u_x,p)^\alpha f(v_x,p)^{1-\alpha} \nonumber\\
     & & &\geq & &\mathop{\text{sup}}\limits_{\lambda \Tilde{u_x} + (1-\lambda)\Tilde{v_x} = t_x} f(u_x,p)^\lambda f(v_x,p)^{1-\lambda}\\
    & & &\geq & &\mathop{\text{sup}}\limits_{\lambda \Tilde{u_x} + (1-\lambda)\Tilde{v_x} = t_x}f(u_x,\Tilde{u_p})^{\lambda^2}f(u_x,\Tilde{v_p})^{\lambda(1-\lambda)}\nonumber\\
    & & & & &f(v_x,\Tilde{u_p})^{(1-\lambda)\lambda} f(v_x,\Tilde{v_p})^{(1-\lambda)^2}\nonumber.
\end{align}
The second line in \eqref{ap5} is true since we have fixed $\alpha$ and the set of points satisfying the constraints is a subset of one where we are free to adjust $\alpha$. Note that
\begin{multline*}
    \lambda \Tilde{u_x} + (1-\lambda)\Tilde{v_x} = \\\lambda^2 \Tilde{u_x} + \lambda(1-\lambda) \Tilde{u_x} + \lambda(1-\lambda)\Tilde{v_x} +(1-\lambda)^2\Tilde{v_x}.
\end{multline*}
Since $\lambda^2 + \lambda(1-\lambda) + \lambda(1-\lambda) +(1-\lambda)^2=1$, Theorem \ref{theorem3} applies. Therefore, we can write
\begin{align*}\label{ap6}
    &\int_{\mathcal{A}}f(t_x,p)dt_x\geq \\ &\hspace{0.3in}\left(\int_{\mathcal{A}}f(u_x,\Tilde{u_p})du_x\right)^{\lambda^2}\left(\int_{\mathcal{A}}f(u_x,\Tilde{v_p})du_x\right)^{\lambda(1-\lambda)}\nonumber\\
    &\hspace{0.3in}\left(\int_{\mathcal{A}}f(v_x,\Tilde{u_p})dv_x\right)^{(1-\lambda)\lambda} \left(\int_{\mathcal{A}}f(v_x,\Tilde{v_p})dv_x\right)^{(1-\lambda)^2}\nonumber.
\end{align*}
By renaming the variables $t_x$, $u_x$ and $v_x$ to $x$ inside the integrals and merging the similar terms into one, we find
\begin{align*}
    &\int_{\mathcal{A}}f(x,p)dx\geq \\ &\hspace{0.3in}\left(\int_{\mathcal{A}}f(x,\Tilde{u_p})dx\right)^{\lambda}\left(\int_{\mathcal{A}}f(x,\Tilde{v_p})dx\right)^{(1-\lambda)},
\end{align*}
where $\lambda \Tilde{u_p} + (1-\lambda)\Tilde{v_p} = p$. Therefore, $\int_{\mathcal{A}}f(x,p)dx$ is log-concave which concludes the promised results. \qed
\subsection{Proof of Lemma \ref{c}}
    Let
    \begin{equation*}\begin{aligned}
        c &:=p_i+p_j \\ &= q_i+q_j.
        \end{aligned}
    \end{equation*}
    Then using \eqref{beta}, we have that
    \begin{equation}\label{20}\begin{aligned}
        \frac{\text{beta}(kp_i,kp_j)}{\text{beta}(kq_i,kq_j)} &= \frac{\Gamma(kq_i)\Gamma(k(c-q_i))}{\Gamma(kp_i)\Gamma(k(c-p_i))}\\
        & = \frac{\Gamma(kq_j)\Gamma(k(c-q_j))}{\Gamma(kp_j)\Gamma(k(c-p_j))}.
        \end{aligned}
    \end{equation}
    Using the definition of digamma function, we have
    \begin{equation}\label{24}
    \frac{\partial}{\partial x} \left[\frac{\Gamma(x-a)}{\Gamma(x-b)}\right] = \frac{\Gamma(x-a)[\psi^{(0)}(x-a)-\psi^{(0)}(x-b)]}{\Gamma(x-b)}.
\end{equation}
As the digamma function is strictly increasing on interval $(0,+\infty)$, the derivative in \eqref{24} is positive if and only if $x-b < x-a$, which is true if and only if $a<b$. Returning to \eqref{20}, we will construct an upper bound using the first identity if $q_i<p_i$ and we will construct an upper bound using the second identity if $q_j < p_j$. For correctness, suppose $q_i<p_i$. Then
    \begin{equation*}\begin{aligned}
        \frac{\text{beta}(kp_i,kp_j)}{\text{beta}(kq_i,kq_j)} &= \frac{\Gamma(kq_i)\Gamma(k(c-q_i))}{\Gamma(kp_i)\Gamma(k(c-p_i))}\\
        & \leq \frac{\text{beta}(kq_i,k(1-\Bar{\eta}-q_j))}{\text{beta}(kp_i,k(1-\Bar{\eta}-p_i))}.
        \end{aligned}
    \end{equation*}
    The other case will work identically. \qed
\subsection{Proof of Lemma \ref{approx}}
From Jensen's inequality we have that
    \begin{equation*}
        \phi \left(\int_a^b f \ dx \right) \leq \int_a^b \phi(f) \ dx,
    \end{equation*}
    where $f$ is integrable over the domain of interest and $\phi$ is a convex function. Since the exponential function is convex, we have that
    \begin{equation*}
        \begin{aligned}
        \text{beta}(a,b) &= \int_0^1 \exp{\left(\log\left(x^{a-1}(1-x)^{b-1}\right)\right)} dx \\
        &\geq \exp{\left(\int_0^1 \log \left(x^{a-1}(1-x)^{b-1}\right)dx\right)}.
        \end{aligned}
    \end{equation*}
    Evaluating the integral, we find
    \begin{equation*}
        \begin{aligned}
        &\exp{\left(\int_0^1 \log \left(x^{a-1}(1-x)^{b-1}\right)dx\right)} = \\ 
        &\int_0^1 (a-1)\log(x) + (b-1)\log(1-x)dx = 2 - (a+b).
        \end{aligned}
    \end{equation*}
    Then
    \begin{equation*}
        \text{beta}(a,b) \geq \exp(2-(a+b)).
    \end{equation*}
    The upper bound follows from the identity that
\begin{equation*}\label{square}
    2\alpha \beta \leq \alpha^2 + \beta^2,\text{ for all $\alpha,\beta \in \mathbb{R}$.}
\end{equation*}
Substituting $\alpha,\beta$ with $x^{a-1}$ and $y^{b-1}$ in the integral representation of the beta function results in the introduced upper bound. \qed

\subsection{Proof of Proposition \ref{prop1}}

    Let $\Bar{p} = \sum\limits_{r=1}^{n}kp_r$. Using equation (49.9) in \cite{kotz2004continuous} we can write
    \begin{equation*}
        \mathbb{E}[x_i] = \frac{kp_i}{\Bar{p}} =p_i,
    \end{equation*}
    and
    \begin{equation*}
        \text{Var}[x_i] = \frac{kp_i(\Bar{p}-kp_i)}{\Bar{p}^2(\Bar{p} + 1)}.
    \end{equation*}
    Since input $p$ belongs to the unit simplex, we have that $\Bar{p}=k$. Substituting $\Bar{p}$ with $k$ concludes the promised results. \qed
\end{document}